\def\draft{0}  
    \def\ITCS{0} 
    \let\original@footnotemark\footnotemark
    \newcommand{\align@footnotemark}{%
      \ifmeasuring@
        \chardef\@tempfn=\value{footnote}%
        \original@footnotemark
        \setcounter{footnote}{\@tempfn}%
      \else
        \iffirstchoice@
          \original@footnotemark
        \fi
      \fi}
    \pretocmd{\start@align}{\let\footnotemark\align@footnotemark}{}{}
    \newcommand{\Rnote}[1]{\begin{framed}\noindent \textcolor{red}{{#1}}\end{framed}} 
    \newcommand{\Rnote}[1]{}
    \newcommand{\remove}[1]{}
    \newtheorem{theorem}{Theorem}
    \newtheorem{example}{Example}
    \newtheorem{lemma}{Lemma}
    \newtheorem{corollary}{Corollary}
    \newtheorem{obs}{Observation}
    \newtheorem{definition}{Definition}
    \newtheorem{proposition}{Proposition}
    \newtheorem{remk}[theorem]{Remark}
    \def\FullBox{\hbox{\vrule width 8pt height 8pt depth 0pt}}
    \def\qed{\ifmmode\qquad\FullBox\else{\unskip\nobreak\hfil
    \penalty50\hskip1em\null\nobreak\hfil\FullBox
    \parfillskip=0pt\finalhyphendemerits=0\endgraf}\fi}
    \def\qedsketch{\ifmmode\Box\else{\unskip\nobreak\hfil
    \penalty50\hskip1em\null\nobreak\hfil$\Box$
    \parfillskip=0pt\finalhyphendemerits=0\endgraf}\fi}
    \newenvironment{proof}{\begin{trivlist} \item {\bf Proof:~~}}
      {\qed\end{trivlist}}
    \newenvironment{proofof}[1]{\begin{trivlist} \item {\bf Proof of #1:~~}}
      {\qed\end{trivlist}}
    \newcommand{\beq}{\begin{equation}}
    \newcommand{\eeq}{\end{equation}}
    \newcommand{\be}{\begin{enumerate}}
    \newcommand{\ee}{\end{enumerate}}
    \newcommand{\bi}{\begin{itemize}}
    \newcommand{\ei}{\end{itemize}}
    \newcommand{\bd}{\begin{description}}
    \newcommand{\ed}{\end{description}}
    \newcommand{\bc}{\begin{center}}
    \newcommand{\ec}{\end{center}}
    \newcommand{\bthm}{\begin{theorem}}
    \newcommand{\ethm}{\end{theorem}}
    \newcommand{\bdefi}{\begin{definition}}
    \newcommand{\edefi}{\end{definition}}
    \newcommand{\bcor}{\begin{corollary}}
    \newcommand{\ecor}{\end{corollary}}
    \newcommand{\blem}{\begin{lemma}}
    \newcommand{\elem}{\end{lemma}}
    \newcommand{\bexa}{\begin{example}}
    \newcommand{\eexa}{\end{example}}
    \newcommand{\bprop}{\begin{proposition}}
    \newcommand{\eprop}{\end{proposition}}
   \newcommand{\added}[1]{}
    \newcommand{\ronote}[1]{\begin{framed}\noindent \textcolor{red}{{Ronen's note: #1}}\end{framed}} 
    \newcommand{\ranote}[1]{\begin{framed}\noindent \textcolor{red}{{Rann's note: #1}}\end{framed}} 
    \newcommand{\ranote}[1]{}
    \newcommand{\ronote}[1]{}
    \def\real{\hbox{\rm\setbox1=\hbox{I}\copy1\kern-.45\wd1 R}}
    \def\neal{\hbox{\rm\setbox1=\hbox{I}\copy1\kern-.45\wd1 N}}
    \newcommand{\abs}[1]{\left|{#1}\right|}
    \newcommand{\eps}{\varepsilon}
\begin{document}
    
    \definecolor{myblue}{RGB}{80,80,160}
    \definecolor{mygreen}{RGB}{80,160,80}
    
    \title{Pareto-Improving Data-Sharing}
 \ifnum\ITCS=1
 \author{}
 \else   
    \author{Ronen Gradwohl\thanks{Department of Economics and Business Administration, Ariel University. Email: \texttt{roneng@ariel.ac.il}.
    Gradwohl gratefully acknowledges the support of National Science Foundation award number 1718670.
    }
    \and
    Moshe Tennenholtz\thanks{Faculty of Industrial Engineering and Management, The Technion -- Israel Institute of
    Technology. Email: \texttt{moshet@ie.technion.ac.il.} The work by Moshe Tennenholtz was supported by funding from the European
Research Council (ERC) under the European Union's Horizon 2020
research and innovation programme (grant number 740435).}}
    \fi
    \date{}
    
	\maketitle

       \begin{abstract}
       
We study the effects of data sharing between firms on prices, profits, and consumer welfare. Although
indiscriminate sharing of consumer data decreases firm profits due to the subsequent increase in competition,
selective sharing can be beneficial. We show that there are data-sharing mechanisms that are strictly Pareto-improving, simultaneously increasing firm profits and consumer welfare. Within the class of Pareto-improving mechanisms, we identify one that maximizes firm profits and one that maximizes consumer welfare.

       \end{abstract}

%
%
    
    \renewcommand{\thefootnote}{\arabic{footnote}}
    \setcounter{footnote}{0}

\section{Introduction}
Data-driven innovation promises to transform the economic landscape and bring tremendous benefits to individuals. One of the key ingredients to such innovation consists of data-sharing between firms. The importance of such sharing is underscored by the European Commission's Strategy for Data, which describes the Commission's plan to ``invest in a High Impact Project on...infrastructures, data-sharing tools, architectures and governance mechanisms for thriving data-sharing and Artificial Intelligence ecosystems.'' Private industry is also rapidly developing such tools,  including platforms such as Google Merchant and Azure Data Share that facilitate data sharing between firms. Finally, the growing area of federated machine learning focuses on designing algorithms that enable firms to share data used by their respective predictive analytics tools \citep{yang2019federated,rasouli2021data}.

Despite the investment in and proliferation of data-sharing tools, the full realization of a data-driven transformation must overcome some hurdles. One of the major challenges is spelled out by the European Commission as follows: ``In spite of the economic potential, data sharing between companies has not taken off at sufficient scale. This is due to a lack of economic incentives (including the fear of losing a competitive edge),'' \citep{EC2020}. \added{Another substantial challenge consists of individuals' concerns regarding the privacy of their data, which may be exacerbated when that data is shared between firms.}

%

In this paper we formally study firms'  incentives for data sharing as well as individuals' gains from it, with a focus on the possible existence of data-sharing mechanisms that both benefit individuals and incentivize firms to participate.  We show that firms' hesitancy may be justified, in that indiscriminate data-sharing can be mutually harmful.\footnote{This is in line with a main insight of the literature on competitive price-discrimination---see the literature review below.} However, we also show that data sharing need not be a zero sum endeavor---benefiting consumers at the expense of firms, or vice versa---and that more-carefully designed mechanisms for partial data-sharing can simultaneously benefit individuals as well as firms. \added{An immediate implication of this result is that privacy regulation that limits data sharing can harm all market participants by preventing such mutually beneficial sharing.}

We undertake our study within the context of e-commerce, in which competing firms engage in imperfect competition over a set of consumers. Firms have heterogeneous data about consumers, and sharing all their data with one another leads to a more efficient outcome and benefits consumers. However, the increased competition brought about by data sharing  lowers firm profits, and so firms do not willingly participate. On the other hand, when partial data can be shared, firms can do so in a way that increases their respective profits, but lowers consumer welfare. Most interestingly, we show that a middle ground exists---that partial data can be shared in a way that increases firm profits, while at the same time also increasing market efficiency and benefiting consumers.

We demonstrate our ideas in the most-studied model of imperfect competition, namely, that of \citet{hotelling1929stability}. There are two firms, each located at a different endpoint of a unit interval, with a unit mass of consumers distributed across this interval. In order to study data sharing we depart from the standard model and suppose that the firms may have data about some of the consumers. We model data about a consumer as information about that consumer's location within the interval. Thus, we suppose that there are consumers whose locations are known only to the first firm, consumers whose locations are known only to the second firm, consumers whose locations are not known to either firm, and consumers whose locations are known to both firms.


Within this model, we first show that indiscriminate sharing of data is harmful to the firms. In particular, we compare firms' baseline profits---those attained in equilibrium with no data sharing---with their profits under full data-sharing, in which each firm shares all of its location data with the other firm. Relative to the baseline of no sharing, full data-sharing increases market efficiency and consumer welfare, but lowers firm profits. We then show that if firms share data about only some of the consumers, as opposed to all, then it is possible to attain outcomes that are strictly Pareto-improving---weakly increasing the utilities of all market participants, with at least one strict increase---relative to the baseline. In particular, our main result is the design of a mechanism that  increases firm profits as well as each consumer's welfare, and, in particular, maximizes joint firm profits subject to being Pareto-improving. In addition, we also design a mechanism for which consumer welfare is maximal, subject to being Pareto-improving.

\paragraph{Organization of the paper} Immediately following is a review of the related literature, after which we formally describe the model. Our analysis then proceeds in several stages of increasing generality. We begin in Section~\ref{sec:one-segment} with an analysis of data sharing in the simple case where only one firm has data about consumers, whereas the other has no data at all. Although interesting in their own right, the mechanisms we design here also help to develop intuition and serve as building blocks for our subsequent mechanisms in Sections~\ref{sec:two-segments} and~\ref{sec:four-segments}, in which both firms are assumed to have some data about consumers. Throughout, we make a standard distributional assumption for Hotelling games, namely, that consumers are uniformly distributed within the unit interval. As a robustness check, in Appendix~\ref{sec:non-uniform} we relax this assumption and generalize our main result---the construction of a firm-optimal, Pareto-improving mechanism---to general consumer distributions.

\paragraph{Related literature}
%
Although information sharing between firms has been studied in a variety of settings,\footnote{These include oligopolistic competition \citep{clarke1983collusion,raith1996general}, financial intermediation \citep{pagano1993information,jappelli2002information,gehrig2007information}, supply chain management \citep{ha2008contracting,shamir2016public}, and
competition between data brokers \citep{gu2019data,ichihashi2020competing}.} our paper is most-closely related to that of competitive price discrimination---see, for example, the surveys of \cite{stole2007price} and \cite{fudenberg2012digital}. One of the main insights from this literature is that when firms have more data about consumers, competition between them is more intense, leading to lower prices. And although this is generally beneficial to consumers, it harms firms. An immediate corollary is that, in general, full data-sharing (which leads to firms having more data about consumers) is harmful to firms, echoing the concern quoted above from the \cite{EC2020}.

Two papers that specifically analyze the effects of data sharing within a Hotelling model include \cite{jentzsch2013targeted}, and \cite{clavora2021effects}.
\cite{jentzsch2013targeted} study a model in which each of two firms may have data both about consumers' locations and about their transportation costs, and consider the eight permutations in which each firm may have either a dataset about locations, a dataset about transportation costs, both datasets, or neither datasets. They then analyze the market effects of firms sharing one or both of their (full) datasets with each other, and provide conditions under which sharing is beneficial to the firms. In particular, it is harmful for firms to share both datasets, but may be beneficial for them to share their (full) data on transportation costs alone. They also show that such partial sharing is typically detrimental to consumers.

\cite{clavora2021effects} studies a Hotelling model in which locations are two-dimensional, and firms hold all data about one dimension, both dimensions, or neither dimension. He analyzes the various scenarios in terms of firm profits and consumer welfare, with a particular emphasis on the comparison to the regimes of full privacy (neither firm has any data) and no privacy (both firms have full data). Interestingly, \citeauthor{clavora2021effects} shows that total firm profits are hump-shaped in the amount of information they hold; for example, the scenario in which each firm holds data about a different dimension yields higher profits than both full privacy and no privacy. Unlike our paper, \cite{clavora2021effects} does not focus on the effects of data sharing, nor on the possibility of sharing partial data about a particular dimension. In addition, in his setup, changing the informational allocation typically has an ambiguous effect on consumers. In contrast, we focus on the possibility of designing data-sharing mechanisms in a way that increases all participants' welfare.

In terms of modeling, our paper is most closely related that of \cite{montes2019value}.  \citeauthor{montes2019value} consider a one-dimensional Hotelling model in which consumers' locations may be known by one, both, or neither firm. Their concern is not data sharing between the firms, but rather the optimal strategy of a data broker who sells the data to the firms. They also consider the effects of a consumer-side technology that allows consumers the ability to protect their privacy.

Our paper is also related to dynamic models of price discrimination \citep[e.g.,][]{liu2006customer,kim2010customer,de2017behavior,choe2018pricing,choe2020behavior}, in which the allocation of data across firms is endogenous: in a first period players compete without any data about consumer locations; then, they obtain information about locations of consumers who bought from them, and compete again in a second period.
Within this branch of the literature, the work of \cite{choe2020behavior} is most closely related to our paper. In that paper, the authors study information sharing in the two-period model, where, before the first period, firms decide whether or not they will share all their information with each other between the two periods. A main result is that (full) information sharing is beneficial to the firms, as it softens price competition in the first period, and this benefit is higher than the loss due to greater competition in the second period. At the same time, due to the decrease in first-period competition, information sharing is harmful to consumers.

There are some other papers that are related to ours, although they study data sharing in somewhat different contexts. 
For example, \cite{liu2006customer} consider a model that has both vertical and horizontal differentiation, and in which firms may collect data about consumer loyalty. They show that  firms have an incentive to share data only under horizontal differentiation, and that such sharing harms consumers.
\cite{belleflamme2020competitive} study a model of Bertrand competition in which firms asymmetrically and probabilistically obtain information about consumers' willingness to pay, and are thus imperfectly able to target consumers. They show that, if firms are asymmetric in their ability to target consumers, then partial data sharing can be beneficial to the firms. Their paper differs from ours in its model---Bertrand competition with imperfect targeting rather than imperfect competition---but also in its focus. First, they do not study optimal sharing schemes, as we do. Furthermore, the sharing schemes they do consider are not Pareto-improving: even when they improve firm profits, they necessarily leave some or all consumers worse off. Finally, \cite{argenziano2021data} study a model where where a consumer interacts sequentially with two different firms, and analyze the question of how data linkages between the firms---akin to full data sharing of the upstream firm with the downstream firm---affects consumer welfare. They consider various privacy regulations, and show that some are beneficial to consumers and others harmful.

Another interesting approach is that of  \cite{de2020crowdsourcing}, who study an 
infinitely repeated setting in which firms compete for market share. The authors construct a data-sharing mechanism that increases firm profits, a mechanism that also features partial sharing.  Unlike our approach, however, \citeauthor{de2020crowdsourcing}'s mechanism relies on the repeated nature of the interaction they study, and the authors invoke folk-theorem-type arguments to show that cooperation can be sustained. 

In an attempt to capture data and information effects in a more general setting, \cite{osorio2020model} introduces a linear-demand model in which firms have imperfect information about consumers, and studies conditions under which firms have incentives to engage in full data-sharing. In his framework, if the gains in targeting from better prediction are higher than the costs due to added competition, data sharing is beneficial to the firms. Even when sharing is beneficial to the firms, however, it can be harmful to consumers due to increased prices. 

Another related paper is \cite{gradwohl2020coopetition}, in which we design optimal data-sharing schemes for firms engaged in taste prediction or targeted advertising. In that paper's setting, better prediction is beneficial to both firms (in aggregate) and to consumers. Thus, there is no conflict between firm profits and consumer welfare, the conflict on which we focus in the current paper.

Finally, our work is also related to several papers that consider orthogonal questions. First, the literature on the sale of data by a data provider \citep[e.g.,][and others]{admati1986monopolistic,bergemann2018design,montes2019value} studies how a data provider can maximize profits by selling data to a monopolists or competing firms who use this data to price discriminate. Our paper differs in that information is not sold by a third party to maximize profits, but rather is shared firms with one another, thereby affecting their respective market positions. Second, the work of \cite{ali2020voluntary} considers a setting where the consumers have information about their location, and may choose to share it with one or both firms so as to intensify competition or lower prices.  Our paper studies an orthogonal question, as we assume firms already have some differential information about consumers, and focus on whether they will share it with each other.

\added{Finally, within the vast literature on privacy, the papers most closely related to ours are those of \cite{taylor2004consumer}, \cite{acquisti2005conditioning}, \cite{conitzer2012hide}, \cite{belleflamme2016monopoly}, \cite{montes2019value}, and \cite{chen2020competitive}, all of which examine the effects of privacy regulation on outcomes and welfare in a variety of price-discrimination models. In addition,
\cite{de2020data} study the competitive effect of disallowing all data sharing between firms,
and \cite{gal2019competitive} analyze the anti-competitive effects of the GDPR.}

\section{Model and Preliminaries}\label{sec:model}
We focus on a standard Hotelling model, in which a unit mass of consumers is spread over the unit interval. 
There are two firms: firm $A$ is located at $\theta_A=0$, and
firm $B$ is located at $\theta_B=1$. Each consumer chooses at most one firm from which to purchase a good.
Consumers derive value $v$ from the good, but pay two costs: the price, and a linear transportation cost that scales
with the distance between the consumer and the firm providing the good.
Thus, a consumer located at $\theta$ who buys from firm $i$ at price $p_i$ obtains utility $v-p_i-t\abs{\theta-\theta_i}$, where
$t$ is the marginal transportation cost. 
We assume throughout that the market is covered---namely, that $v> 2t$---so that all consumers
purchase a good even when there is a monopolist firm. Finally, we also assume for simplicity that firms' marginal costs are 0, and
so their profit from the sale of a good is equal to the price. These are all standard assumptions in Hotelling games.

The standard setup consists of a two-stage game: First, firms simultaneously set prices; second, consumers choose a firm
and make a purchase. In the unique subgame perfect equilibrium of this game,\footnote{The equilibrium is unique up to the choice of the indifferent consumer located at $\theta=0.5$.} firms' prices are $p_A=p_B=t$, consumers
in $[0,0.5)$ buy from $A$, and consumers in $(0.5,1]$ buy from $B$ \citep[see, e.g.,][]{belleflamme2015industrial}.

In this paper we will consider a variant of the standard model by supposing that firms may have additional information about
some of the consumers. In particular, we will suppose that, for some consumers, one or both firms know the
location of that consumer on the unit interval. For such consumers, firms will be able to offer a {\em personalized price}---a 
special offer specifically tailored to that consumer. If a firm does not know a consumer's location,
however, then it cannot distinguish between that consumer and all other consumers whose location it does not know. All such consumers are offered the same {\em uniform price} by the firm.

To model how data is spread across the firms we suppose that, instead of a single unit interval of consumers,
there are four segments of consumers: 
segment $S_B$ consists of those consumers whose location only firm $B$ knows, $S_A$ 
consists those consumers whose location only firm $A$ knows, $S_{\emptyset}$ consists those consumers whose location 
neither firm knows, and $S_{AB}$ consists those consumers whose location both firms know.
Each segment consists of a unit interval, with firm $A$ located at 0 and firm $B$ at 1. 
In most of the paper we assume for simplicity that, within
each segment, consumers are uniformly distributed; however, we relax this assumption in Appendix~\ref{sec:non-uniform}. 
Finally, we assume that
firms' higher-order beliefs are as follows: if a firm knows a consumer's location, then it also knows the segment on which
that consumer is located (and so it knows whether or not the other firm knows the consumer's location); if a firm does not
know a consumer's location, then it also does not know the consumer's segment (and so it does not know whether or not the other firm knows the consumer's location).

Given this informational environment, a data-sharing mechanism $M=(M_B, M_A)$ between firms specifies a subset $M_B$ of $S_B$ and a subset $M_A$ of $S_A$, with the interpretation that firm $B$ shares with
firm $A$ the locations of $S_B$ consumers on $M_B$, and firm $A$ shares with firm $B$ the locations of $S_A$
consumers on $M_A$. 
Note that neither firm can share locations of consumers in $S_{\emptyset}$ since neither knows them, and neither firm 
need share locations of consumers in $S_{AB}$ since both
already know them. 
Formally, a mechanism $M$

Two simple examples of data-sharing mechanisms are one that involves {\em no sharing} between firms, $M=(\emptyset,\emptyset)$, and one that involves {\em full sharing}, $M=(S_B,S_A)$.
As the names imply, in the former no firm shares any information with the other, whereas in the latter both firms share
all their information with each other. Alternatively, a firm may share data about a subset of the consumers on its segment. For example, firm $B$ may share with firm $A$ the locations of consumers in $M_B=[x, y]\subseteq [0,1]$ on $S_B$. In this case, if a consumer located in $[x,y]$ on $S_B$ arrives, both firms will know that consumer's location. On the other hand, 
if a consumer located in $[0,1]\setminus [x,y]$ on $S_B$ arrives, firm $B$ will know that consumer's location, and firm $A$ will be able to deduce that the consumer is not located in $[x,y]$ on $S_B$.

One important desideratum of data-sharing mechanisms is that they be {\em individually rational (IR)}: That the expected utility of each firm with data sharing be at least as high as without data sharing. A data-sharing mechanism should be IR if we expect firms to participate.

Our main focus will be on mechanisms that are not only IR, but also {\em Pareto-improving}: that when data sharing takes place, (i) the expected utility of each firm and {\em every} consumer be at least as high as without data sharing, and that (ii) either firm $A$'s profits, firm $B$'s profits, or total consumer welfare be strictly higher.

In our analysis, we consider the following order of events:
\begin{enumerate}
\item Firms engage in a data-sharing mechanism $M=(M_B,M_A)$.
\item Firms simultaneously and publicly announce uniform prices, $p_A$ and $p_B$.
\item A consumer arrives, and all firms who know the consumer's location $\theta$ simultaneously offer 
that consumer a personalized price, $p_A(\theta)$ and $p_B(\theta)$.
\item The consumer chooses a firm from which to buy, and payoffs are realized.
\end{enumerate}

Note that firms share data, and then simultaneously announce their uniform prices, before consumers arrive. 
After a consumer arrives to the market, the firms who know
the consumer's specific location simultaneously offer personalized prices. 
When a firm offers a consumer a personalized price, this offer subsumes the firm's original uniform price.
Thus, a firm's uniform price will apply only to those consumers who will not subsequently be offered a personalized price
by that firm.

Importantly, when firms set personalized prices, they know
the uniform price set by the other firm in the previous stage. 
This is the standard timing considered in the literature \citep[see, e.g.,][]{thisse1988strategic,choudhary2005personalized,choe2018pricing,montes2019value,chen2020competitive}.\footnote{An alternative model that we do not analyze
is one in which firms set uniform and personalized prices simultaneously, for each consumer. \citet{montes2019value} show that, in this case, a (pure) equilibrium may fail to exist.}

For any fixed mechanism $M$, we will be consider the pure subgame perfect equilibrium of the game that starts with data-sharing mechanism $M$. Such an equilibrium always exists. We will be interested in designing mechanisms $M$ that lead to equilibria with high firm profits and high consumer welfare.

Data sharing between the firms has a direct effect and an indirect effect. The direct effect is that a firm
that has obtained information about more consumers' locations via the sharing mechanism can now offer personalized prices to more consumers, affecting both firms' personalized prices offered to those consumers
in equilibrium. The indirect effect is that data sharing may change
the set of consumers for whom the uniform price applies, since additional consumers will now be offered personalized
prices. And since the uniform price is determined in equilibrium in part by the locations of consumers to whom that price will
apply, a change in the set of consumers may effect a change in the equilibrium uniform price.

%
%


\section{Warm Up: One Segment}\label{sec:one-segment}
We begin our analysis with the simplest informational environment in which data sharing has some bite. Namely,
we assume that the entire mass of consumers is located on segment $S_B$: firm $B$ has data and knows the locations of all consumers, whereas firm $A$ has no data about any consumer.
This environment serves to illustrate some of our main insights on the
benefits of data sharing to both firms and consumers. We will also use the mechanisms and intuitions developed here as building blocks for mechanisms in more complex environments.

Because we will use these mechanisms as building blocks, it will  be helpful to weaken the IR requirement for this setting.
In particular, we will say that a mechanism is {\em jointly IR} if the {\em sum} of firm profits under the mechanism is higher than without data sharing. Note that an IR mechanism is jointly IR, but that the reverse may not hold. However, a jointly IR mechanism can always be made IR if monetary transfers between the firms are feasible.

In the subsequent subsections we analyze six different mechanisms:
\begin{enumerate}
\item No data-sharing: this is the baseline.
\item Full data-sharing: while beneficial to the consumers, this mechanism is harmful to firms. It is not jointly IR.
\item Firm $B$ shares data on consumers in $[1/4, 1/2)$: this mechanism is jointly IR, strictly increases consumer welfare,
and leaves total firm profits unchanged. Furthermore, this mechanism maximizes consumer welfare subject to being jointly IR.
\item Firm $B$ shares data on consumers in $[\delta, 1/2)$, where $\delta$ is close to 0: this mechanism is IR and strictly increases both firms' profits, but at a cost to consumer welfare.
\item Firm $B$ shares data on consumers in $[\alpha, 1/2)$, with a particular $\alpha \in (\delta, 1/4)$:
this mechanism is  jointly IR and trades off firm profits and consumer welfare, leading to a strict increase in both.
\item Firm $B$ shares data on consumers in $[1/4, 3/8)$:
this mechanism maximizes the sum of firm profits subject to not harming any consumer.
\end{enumerate}

Before turning to the six mechanisms, we state a preliminary note and a simple lemma.
First, note that, in the single-segment setting, since $B$ has data about every consumer it will always offer a personalized price. 
Thus, $B$'s uniform price applies to no consumers, and so for the remainder of the section we will ignore it.
Second, the simple lemma:
\begin{lemma}\label{lem:baseprice}
Regardless of firm $A$'s uniform price $p_A\geq 0$, in every equilibrium consumers in $(1/2, 1]$ will purchase from firm $B$.
\end{lemma}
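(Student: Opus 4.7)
The plan is to argue that for any consumer $\theta \in (1/2, 1]$, firm $B$ has a strict unilateral incentive in the personalized-pricing stage to undercut any competing offer from $A$ at a strictly positive price, so no equilibrium can leave $B$ without such a sale.

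Fix an arbitrary uniform price $p_A \geq 0$ set by $A$, and fix a consumer at location $\theta \in (1/2,1]$. Since $B$ has data about every consumer in this one-segment setting, $B$ chooses a personalized price $p_B(\theta)$ after observing $p_A$. The consumer buys from $B$ whenever $v - p_B(\theta) - t(1-\theta) \geq \max\{0,\, v - p_A - t\theta\}$, which by rearrangement reduces to $p_B(\theta) \leq \min\{v - t(1-\theta),\, p_A + t(2\theta-1)\}$, using the standard convention that $B$ wins the consumer at the indifference point. The key observation is that both upper bounds are strictly positive: the second because $2\theta - 1 > 0$ and $p_A \geq 0$, and the first because the covered-market assumption $v > 2t$ combined with $\theta > 1/2$ yields $v - t(1-\theta) > 3t/2 > 0$.

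Hence there exists a strictly positive personalized price at which $B$ captures the consumer. Since $B$'s marginal cost is zero, any sale at a positive price yields strictly positive profit, while losing the consumer yields zero. Thus in any equilibrium of the personalized-pricing subgame $B$ must set $p_B(\theta)$ so that the consumer buys from $B$: if instead the consumer bought from $A$ (or did not buy), $B$ could profitably deviate by setting its personalized price just below the relevant upper bound above.

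Because this argument holds for every $\theta \in (1/2,1]$ and every $p_A \geq 0$, in every subgame perfect equilibrium all consumers in $(1/2,1]$ purchase from firm $B$. The only subtle step is the handling of the ``or did not buy'' case, which is dispatched by noting $v - t(1-\theta) > 0$ so that $B$ can always find a price both acceptable to the consumer and strictly profitable; no deeper obstacle arises.
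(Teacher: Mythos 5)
Your proposal is correct and follows essentially the same undercutting argument as the paper: firm $B$, being strictly closer to any $\theta>1/2$, can always win that consumer at a strictly positive personalized price, so losing the sale cannot be an equilibrium. The only differences are cosmetic — you make the participation constraint explicit via $v>2t$ (which the paper leaves implicit), while the paper separately treats the case where $A$ also offers $\theta$ a personalized price; your argument covers that case too, since it only uses that $A$'s competing offer is nonnegative.
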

\begin{proof}
Fix a consumer $\theta\in(1/2,1]$. If both $A$ and $B$ offer $\theta$ a personalized price, then since $\theta$ is closer
to $B$ the latter will always be able to offer a lower price. Thus, $\theta$ will purchase from $B$. If $A$ does
not offer a personalized price, then consumer $\theta$ chooses between buying from $A$ at uniform price $p_A$ and getting utility
$v-p_A - t\theta$, or buying from $B$ at personalized price $p_B(\theta)$ and getting utility 
$v-p_B(\theta)-t(1-\theta)$. Since $\theta> 1/2$, firm
$B$ can always choose a positive $p_B(\theta)$ such that $v-p_B(\theta)-t(1-\theta)> v-p_A - t\theta$.
\end{proof}

We now proceed to the analysis of the five different mechanisms.

\subsection{No Data-Sharing}
We begin with the case of no data-sharing. In this case, firm $A$ chooses some uniform price $p_A$, whereas
firm $B$ personalizes a price to each consumer, if possible making the latter indifferent between buying from $A$ and from 
$B$.\footnote{Assume that if a consumer is indifferent, he purchases from the firm offering a personalized price.} 
Thus, given $p_A$, firm $B$ charges personalized price
$p_B(\theta) = \max\{0,  p_A-t(1-2\theta)\}$ to a consumer located at $\theta$. Observe that, at these prices,
consumers in $[0, \mu)$ purchase from firm $A$, whereas consumers in $[\mu, 1]$ purchase from $B$, where
$$\mu = \mu(p_A) = \frac{1}{2}-\frac{p_A}{2t}.$$
Given this, firm $A$ maximizes its profit $p_A \cdot \mu(p_A)$
by solving
$$\max_{p_A}p_A\left(\frac{1}{2}-\frac{p_A}{2t}\right).$$
This is maximized at $1/2 - p_A/t  = 0$ and so at
$p_A = t/2$. At this price, firm $B$'s personalized prices are $\max\{t(2\theta-1/2),0\}$, and the first indifferent consumer is 
$\mu(t/2) = 1/4$. Thus, consumers $[0,1/4)$ purchase from $A$, whereas consumers $[1/4, 1]$ purchase from B.
Finally, at these prices firm profits are $\pi_A = t/8$ and 
$$\pi_B = \int_{1/4}^1 t(2\theta-1/2)d\theta = \frac{9t}{16},$$
whereas consumer welfare is
$$CW = \int_0^1 \max\{v-\theta t - p_A, v-t(1-\theta) - p_B(\theta)\} d\theta 
=  \int_0^1 (v - t/2 - \theta t)  d\theta = v-t.$$
This is summarized in the following proposition:
\begin{proposition}
Suppose firm $B$ has all consumers' information, and firm $A$ has none. 
Then firm $A$'s uniform price is $p_A=t/2$, firm $B$'s personalized
price is $p_B(\theta)=\max\{t(2\theta-1/2),0\},$
profits are $\pi_A=t/8$ and $\pi_B=9t/16$, and consumer welfare is $CW = v-t$.
\end{proposition}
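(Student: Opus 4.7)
The plan is to verify the proposition by backward induction on the extensive-form game, essentially formalizing the derivation sketched in the paragraphs preceding the statement. By Lemma~\ref{lem:baseprice}, every consumer $\theta \in (1/2,1]$ purchases from firm $B$ in equilibrium regardless of $p_A$, so it suffices to analyze the competition over consumers $\theta \in [0,1/2]$.

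First I would characterize firm $B$'s personalized-pricing best response for any fixed uniform price $p_A \geq 0$. A consumer at $\theta \in [0,1/2]$ weakly prefers $B$'s offer $p_B(\theta)$ to $A$'s uniform offer iff $v - p_B(\theta) - t(1-\theta) \geq v - p_A - t\theta$, i.e., $p_B(\theta) \leq p_A - t(1-2\theta)$. Since $B$'s per-consumer profit is linear and increasing in $p_B(\theta)$ on this feasible range, $B$ optimally sets $p_B(\theta) = p_A - t(1-2\theta)$ whenever this quantity is non-negative, and otherwise cedes the consumer to $A$ by setting $p_B(\theta) = 0$. This yields the formula $p_B(\theta) = \max\{0, p_A - t(1-2\theta)\}$ and identifies the marginal consumer as $\mu(p_A) = 1/2 - p_A/(2t)$, so that consumers in $[0,\mu(p_A))$ purchase from $A$.

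Second, I would substitute into firm $A$'s problem. Firm $A$'s profit is $p_A \cdot \mu(p_A) = p_A\bigl(1/2 - p_A/(2t)\bigr)$, a strictly concave quadratic whose first-order condition gives $p_A = t/2$ and thus $\mu(t/2) = 1/4$; plugging back yields the claimed $p_B(\theta) = \max\{t(2\theta - 1/2), 0\}$ and $\pi_A = (t/2)(1/4) = t/8$. The remaining claims follow by direct integration: $\pi_B = \int_{1/4}^{1} t(2\theta - 1/2)\, d\theta = 9t/16$, and consumer welfare is computed by noting that, in equilibrium, buyers of $A$ (those with $\theta \leq 1/4$) obtain utility $v - t/2 - t\theta$ while buyers of $B$ (those with $\theta \geq 1/4$) obtain $v - t(1-\theta) - t(2\theta - 1/2) = v - t/2 - t\theta$, so $CW = \int_0^1 (v - t/2 - t\theta)\, d\theta = v - t$. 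The only minor obstacle is to rule out corner deviations: that $A$ gains nothing from choosing $p_A$ outside the range where $\mu(p_A) \in [0,1/2]$, which follows from strict concavity of its objective, and that $B$'s indifference-inducing price is genuinely optimal, which follows from the monotonicity of $B$'s per-consumer profit noted above.
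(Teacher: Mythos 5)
Your proposal is correct and follows essentially the same route as the paper: firm $B$ best-responds with indifference-inducing personalized prices $p_B(\theta)=\max\{0,p_A-t(1-2\theta)\}$, firm $A$ maximizes $p_A\left(\frac{1}{2}-\frac{p_A}{2t}\right)$ to get $p_A=t/2$ and $\mu=1/4$, and profits and welfare follow by direct integration. Your added remarks on ruling out corner deviations and on the optimality of $B$'s indifference pricing only make explicit what the paper leaves implicit.
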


\subsection{Full Data-Sharing}
Suppose now that the firms engage in full data-sharing, in which firm $B$ shares all its data with firm $A$.
In this case, then, both firms know the location of every consumer, and so both engage in personalized
pricing. This setting is analyzed by \citet{taylor2014consumer}, who show the following:

\begin{proposition}\label{prop:one-full}
Under full data sharing, personalized prices are
$$p_A(\theta) = \max\{t(1-2\theta),0\}~~~and~~~p_B(\theta)=\max\{t(2\theta-1),0\},$$
profits are $\pi_A=\pi_B=t/4$, and consumer welfare is $CW = v-3t/4$.
\end{proposition}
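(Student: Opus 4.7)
The plan is to reduce the Stage~3 pricing subgame under full data-sharing to a collection of independent Bertrand-like contests---one per consumer location $\theta$---and then to integrate. First I would observe that once the mechanism $M=(S_B,S_A)$ has been executed, every consumer is known to both firms, so each is offered a personalized price by each firm. Since a personalized offer subsumes the firm's own uniform price, the Stage~2 uniform prices never bind on the equilibrium path and can be dropped from the payoff calculation.

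For each fixed $\theta$ I would then solve the two-firm personalized-pricing game in isolation. The consumer prefers $A$ to $B$ iff $p_A(\theta)-p_B(\theta)\leq t(1-2\theta)$, so the game is Bertrand with an asymmetric location handicap of size $t|1-2\theta|$ in favor of the closer firm. A standard undercutting argument pins down the unique equilibrium (up to the knife-edge at $\theta=1/2$) in which the closer firm wins at a price equal to that handicap while the farther firm is driven down to~$0$: for $\theta<1/2$, $p_A(\theta)=t(1-2\theta)$ and $p_B(\theta)=0$; for $\theta>1/2$, $p_B(\theta)=t(2\theta-1)$ and $p_A(\theta)=0$. This Bertrand step is the heart of the argument and coincides with \citet{taylor2014consumer}'s analysis.

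The remainder is routine integration. Firm $A$'s profit is $\pi_A=\int_0^{1/2} t(1-2\theta)\,d\theta=t/4$, and by symmetry $\pi_B=t/4$; consumer welfare decomposes as $CW=\int_0^{1/2}(v-t+t\theta)\,d\theta+\int_{1/2}^1(v-t\theta)\,d\theta=v-3t/4$, where the two integrands come from substituting the equilibrium prices into $v-p_A(\theta)-t\theta$ and $v-p_B(\theta)-t(1-\theta)$ respectively.

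The main obstacle I expect is ruling out deviations that try to exploit off-path behavior in the uniform-price stage---say, a firm refraining from offering a personalized price in the hope that its rival's uniform price will apply. The covered-market assumption $v>2t$ implies that any consumer is willing to buy at some positive price from either firm, so a firm with a locational advantage can always guarantee itself positive profit via personalization; this eliminates such deviations and, combined with the Bertrand analysis, yields the characterization stated in the proposition.
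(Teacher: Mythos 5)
Your proposal is correct, and its per-location Bertrand-undercutting argument (closer firm wins at a price equal to the transportation-cost differential, farther firm is driven to zero) followed by integration is precisely the standard derivation behind this result; your profit and welfare integrals both check out. Note that the paper itself supplies no proof here---it simply imports the characterization from \citet{taylor2014consumer}---so your writeup fills in exactly the argument the paper delegates to that reference, including the tie-breaking convention at indifference and the observation that uniform prices never bind when every consumer receives personalized offers.
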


Notice that under full sharing, consumers are better off than under no sharing. For the firms, naturally firm $B$ is better
 off with no sharing and firm $A$ with sharing. Importantly, however, note that total profits $\pi_A+\pi_B$ are higher under
 no sharing ($11t/16$) than under full sharing ($t/2$). Thus, full sharing is not jointly IR, and there is no price firm $B$ could charge firm $A$ in exchange for fully sharing its data that would make both firms better off.

\subsection{Sharing $[\eps, 1/2]$}
In this section we analyze three data-sharing mechanisms in which firm $B$ shares data about consumers in 
$[\eps, 1/2)$ for various values of $\eps\in(0,1/4]$.
After presenting our general result, we consider the three specific cases that map to mechanisms 3-5 above, 
and show the corresponding properties:
\begin{itemize}
\item[3.] If $\eps=1/4$, then the mechanism strictly increases consumer welfare while leaving total firm profits unchanged;
\item[4.] If $\eps$ is close to 0, then \textbf{both} firms attain higher profits than under no sharing;
\item[5.] There exist values of $\eps$ for which the mechanism strictly increases both consumer welfare and total firm profits.
\end{itemize}
We now state the general result.

\begin{proposition}\label{prop:one-eps}
Consider the mechanism in which firm $B$ shares with firm $A$ the locations of consumers in $[\eps,1/2)$, where $\eps\in(0,1/4]$. Then $A$'s uniform price is $p_A=t(1-2\eps)$,  $A$'s personalized prices for consumers $\theta\in [\eps,1/2)$ are
$p_A(\theta) = t(1-2\theta)$, and  $B$'s personalized prices are
$$p_B(\theta) = \begin{cases} t(2\theta-2\eps)&\mbox{if } \theta\in[1/2,1] \\
t(2\theta-1)&\mbox{if } \theta\in[\eps,1/2)\\
0 & \mbox{otherwise.} \end{cases}$$
Profits are $\pi_A=t(1/4-\eps^2)$ and $\pi_B=t(3/4-\eps)$, and consumer welfare is $CW = v-t(5/4-\eps-\eps^2)$.
\end{proposition}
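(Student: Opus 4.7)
The plan is to do backward induction. Since firm $B$'s uniform price is moot (it noted earlier that $B$ will always personalize on this segment), the only unknowns at stage 2 are $p_A$ and, at stage 3, the two firms' personalized offers. I would partition the unit interval into three informational regions: $[0,\eps)$ and $[1/2,1]$, where only $B$ is informed, and $[\eps,1/2)$, where both firms are informed after the sharing.

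For the personalized subgame at an arbitrary uniform $p_A$, I would treat each region separately. On $[1/2,1]$, Lemma~\ref{lem:baseprice} says $B$ wins, and $B$ optimally drives the consumer to indifference with $A$'s uniform offer, giving $p_B(\theta)=p_A+t(2\theta-1)$, which at the candidate $p_A=t(1-2\eps)$ equals $2t(\theta-\eps)$. On $[\eps,1/2)$ both firms know $\theta$ and the consumer is strictly closer to $A$; the Hotelling Bertrand outcome (with the same symmetric-indifference convention used in Proposition~\ref{prop:one-full}) is $p_A(\theta)=t(1-2\theta)$ and $p_B(\theta)=t(2\theta-1)$, with the sale going to $A$. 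Crucially, $A$'s realized revenue here is $t(1-2\theta)$ per unit and does not depend on $p_A$. On $[0,\eps)$ only $B$ is informed, and $B$'s best personalized offer against the uniform price is again $p_A+t(2\theta-1)$; this is positive (and hence worth making) precisely when $\theta\ge \mu(p_A):=\tfrac{1}{2}-\tfrac{p_A}{2t}$, so $A$ keeps $[0,\min\{\eps,\mu(p_A)\})$ at price $p_A$ and loses the rest to $B$.

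At stage 2, $A$'s profit is therefore $p_A\cdot\min\{\eps,\mu(p_A)\}+\int_\eps^{1/2}t(1-2\theta)\,d\theta$, with the integral constant in $p_A$. For $p_A\le t(1-2\eps)$ one has $\mu(p_A)\ge \eps$ and the first term is just $p_A\eps$, strictly increasing in $p_A$; for larger $p_A$ it becomes $p_A\cdot \mu(p_A)$, whose unconstrained maximizer is $t/2$, which lies weakly below $t(1-2\eps)$ precisely because $\eps\le 1/4$. Hence the optimum sits on the boundary, $p_A^\star=t(1-2\eps)$. Substituting this back into the personalized-price formulas reproduces the stated closed forms for $p_A(\theta)$ and $p_B(\theta)$, and the values of $\pi_A$, $\pi_B$, and $CW$ follow by routine piecewise integration over the three regions.

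The main obstacle is the equilibrium-selection subtlety in the doubly-informed region $[\eps,1/2)$: Bertrand-style subgames with personalized pricing typically admit a continuum of losing-bid equilibria, so I would need to commit to (and sanity-check) the symmetric-indifference selection already implicit in Proposition~\ref{prop:one-full}. A secondary worry is that $A$'s uniform price also shifts $B$'s personalized prices on $[1/2,1]$ and the cutoff $\mu(p_A)$ on $[0,\eps)$; I would need to verify explicitly that a downward deviation from $p_A^\star$---which merely lowers $B$'s prices on $[1/2,1]$ (cost-free for $A$) without expanding $A$'s share of $[0,\eps)$ beyond $\eps$---is strictly unprofitable. Neither of these is deep, but both deserve an explicit line.
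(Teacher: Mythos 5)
Your proposal is correct and follows essentially the same route as the paper's proof: backward induction with the same three-region decomposition, the same indifference-based personalized prices, and the same corner-solution argument pinning $p_A^\star=t(1-2\eps)$ via $\eps\le 1/4$. If anything, you are slightly more explicit than the paper about why downward deviations in $p_A$ (where $\mu(p_A)\ge\eps$ and $A$'s share of $[0,\eps)$ is capped at $\eps$) are unprofitable, a case the paper's constrained-maximization formulation leaves implicit.
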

The proof appears at the end of this section.

We now return to the three mechanisms discussed above. 
\begin{itemize}
\item[3.] If $\eps=1/4$,
then firm profits are $\pi_A=3t/16$ and $\pi_B=t/2$, and so $\pi_A+\pi_B=11t/16$. This is the same as total profits under
no sharing, and thus the mechanism is jointly IR. Consumer welfare under this  mechanism is $v-15t/6$, which is strictly higher than the consumer welfare of $v-t$ under no sharing. In addition, observe that here, firm $A$'s uniform price is $t/2$, which is the same as that firm's uniform price absent data sharing. This implies that all consumers are weakly better off under this data sharing mechanism when compared to no sharing: Consumers located in $[0,1/4)$ buy from $A$ at the same uniform price, consumers in $[1/2,1]$ buy from $B$ at the same personalized price, and the remaining consumers switch from $B$ to $A$ and pay a lower personalized price.  Finally, in this mechanism there is no deadweight loss, since each consumer purchases the good from the closer firm. Thus, because the joint-IR constraint binds, there is no mechanism that can lead to higher consumer welfare while also being jointly IR. 

\item[4.] As $\eps\rightarrow 0$, firm profits are $\pi_A\rightarrow t/4$ and $\pi_B\rightarrow 3t/4$. These profits
are strictly higher for \textbf{both} firms than under no sharing, and so the mechanism is IR. 
However, this comes at the expense of consumer
welfare, which now approaches $v-5t/4$.

\item[5.] Does there exist some $\eps$ for which both total profits $\pi_A+\pi_B$ and consumer welfare are 
strictly higher than under no
sharing?\footnote{One might additionally ask whether there exists an $\eps$ for which welfare increases, and also \textbf{both}
firms' profits increase (and not just total profits). Such an $\eps$ must additionally satisfy
$\pi_A=t \left(\frac{1}{4}-\eps^2\right) > \frac{t}{8}$
and
$\pi_B =t\left(\frac{3}{4}-\eps\right) > \frac{9t}{16}.$
However, it is straightforward to show that no $\eps$ can simultaneously satisfy these and (\ref{eqn:1}).} In order to achieve this, $\eps$ must satisfy
$$\pi_A+\pi_B =t \left(\frac{1}{4}-\eps^2\right) + t\left(\frac{3}{4}-\eps\right) > \frac{t}{8}+\frac{9t}{16}$$
and
\begin{eqnarray}v-t\left(\frac{5}{4} - \eps - \eps^2\right) >v-t.\label{eqn:1}\end{eqnarray}
We can observe that these are both satisfied whenever
$$\frac{1}{4}<\eps+\eps^2 < \frac{5}{16}.$$
\end{itemize}

\begin{proofof}{Proposition~\ref{prop:one-eps}}
We begin with $A$'s uniform price. By Lemma~\ref{lem:baseprice}, that uniform price will apply only to consumers
in $[0, \eps)$: those in $[\eps,1/2)$ will pay $A$'s personalized price, whereas the rest will buy from $B$.
Given this, firm $A$ maximizes its profit $p_A \cdot \mu(p_A)$ on the segment $[0,\eps)$ by 
solving
$$\max_{p_A}p_A\left(\frac{1}{2}-\frac{p_A}{2t}\right)~~\mbox{s.t.}~~\frac{1}{2}-\frac{p_A}{2t}\leq \eps.$$
The optimal solution here is a corner one, with $1/2-p_A/(2t)=\eps$ and so $p_A = t(1-2\eps)$.

For consumers in $[\eps,1/2)$ firm $A$ offers the personalized price $p_A(\theta) = t(1-2\theta)$.
Firm $B$ offers the same personalized prices as with full sharing on these consumers, but offers
price
$p_B(\theta) = t(2\theta-2\eps)$ to consumers in $[1/2,1]$ as their outside option is to buy from $A$ at price
$p_A=t(1-2\eps)$, and $B$'s chosen price leaves them indifferent.

We now calculate profits given these prices. First,
$$\pi_A = \eps\cdot t(1-2\eps) + \int_\eps^{1/2} t(1-2\theta)d\theta = t\left(\frac{1}{4}-\eps^2\right).$$
Next,
$$\pi_B = \int_{1/2}^1 t(2\theta - 2\eps)d\theta = t\left(\frac{3}{4}-\eps\right).$$

Finally, consumer welfare is
\begin{align*}
CW&=\int_0^{\eps}\left(v-t(1-2\eps)-t\theta\right)d\theta+
\int_{\eps}^{1/2}\left(v-t(1-2\theta)-t\theta\right)d\theta+
\int_{1/2}^1\left(v-t\left(2\theta-2\eps\right)-t(1-\theta)\right)d\theta\\
&=v-t\left[\int_0^{\eps}(1-2\eps+\theta)d\theta+
\int_{\eps}^{1/2}(1-\theta)d\theta+
\int_{1/2}^1(1+\theta-2\eps)d\theta\right]\\
&=v-t\left[\eps(1-2\eps)+\frac{\eps^2}{2}+\frac{3}{8}-\eps-\frac{\eps^2}{2}
+\frac{1}{2}\left(\frac{3}{2}-2\eps\right)+\frac{1}{8}\right]\\
&= v- t\left(\frac{5}{4}-\eps-\eps^2\right).
\end{align*}
\end{proofof}

\subsection{Firm-optimal mechanism}\label{sec:one-firm-opt}
Recall that if firm $B$ shares data about consumers in $[\delta,1/2)$, where $\delta$ is close to 0, both firms gain but consumers are harmed.
In this section we design a mechanism that is best for firms subject to leaving every consumer unharmed.
In particular, we design a mechanism that is {\em jointly firm-optimal}---namely, that maximizes
the sum of firms' profits---subject to the condition that every consumer's welfare under the mechanism
is weakly higher than her welfare under no sharing. 

\begin{proposition}\label{prop:one-firm-opt}
Consider the mechanism in which firm $B$ shares with firm $A$ the locations of consumers in $[1/4,3/8)$.
This mechanism is weakly beneficial to every consumer. Furthermore, it is jointly firm-optimal relative to all other mechanisms that are
weakly beneficial to every consumer.
\end{proposition}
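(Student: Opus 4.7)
The plan is to characterize the set of feasible mechanisms---those that leave every consumer weakly better off than under no sharing---and then maximize joint firm profits within this class. A useful preliminary is that, under no sharing, every consumer obtains utility $v - t/2 - t\theta$: consumers in $[0,1/4)$ pay $A$'s uniform $t/2$, while consumers in $[1/4,1]$ pay $B$'s personalized $t(2\theta-1/2)$ plus transportation $t(1-\theta)$, both equaling $v - t/2 - t\theta$. So the consumer constraint reduces to $U_M(\theta) \geq v - t/2 - t\theta$ for every $\theta$.

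For a general mechanism $M_B$, consumer $\theta$'s utility under the mechanism is $v - t(1-\theta)$ if $\theta \in M_B$ with $\theta < 1/2$ (both firms Bertrand-compete, so $A$ wins at $t(1-2\theta)$), $v - t\theta$ if $\theta \in M_B$ with $\theta > 1/2$, and $v - p_A - t\theta$ in the uniform regime (the same whether $A$ or $B$ wins, since $B$'s best response makes the consumer indifferent). Matching these against $v - t/2 - t\theta$ yields two feasibility conditions: $M_B \cap [0,1/4) = \emptyset$ and the equilibrium $p_A \leq t/2$. Moreover, for the firm-optimal mechanism we may restrict to $M_B \cap (1/2, 1] = \emptyset$, since sharing a consumer with $\theta > 1/2$ reduces $B$'s revenue from $p_A + t(2\theta-1)$ to $t(2\theta-1)$ while $A$ still earns nothing, strictly hurting joint profits.

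The key step is to show that for any feasible $M_B \subseteq [1/4, 1/2)$, the equilibrium uniform price is $p_A = t/2$. Writing $A$'s profit as $\int_{M_B} t(1-2\theta)\,d\theta + t(1-2\mu)\, h(\mu)$ where $\mu = \mu(p_A)$ and $h(\mu) = |[0,\mu)\setminus M_B|$, the constraint $M_B \cap [0,1/4) = \emptyset$ gives $h(\mu) = \mu$ for $\mu \leq 1/4$ and $h(\mu) \leq \mu$ throughout. Since the function $(1-2\mu)\mu$ is maximized on $[0,1/2]$ at $\mu = 1/4$ with value $1/8$, we obtain $(1-2\mu)\,h(\mu) \leq 1/8$ with equality uniquely at $\mu = 1/4$; hence $p_A = t/2$ is $A$'s unique best response for every feasible $M_B$.

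Plugging $p_A = t/2$ into the joint-profit expression, $A$ earns $t/8$ on $[0,1/4)$ via uniform pricing, $B$ earns $\int_{[1/4,1] \setminus M_B} t(2\theta - 1/2)\,d\theta$ from the uniform regime, and $A$'s personalized revenue is $\int_{M_B} t(1-2\theta)\,d\theta$. A routine rearrangement yields $\pi_A + \pi_B = \tfrac{11t}{16} + \int_{M_B} t\bigl(\tfrac{3}{2} - 4\theta\bigr)\,d\theta$. Since the integrand is positive on $[1/4, 3/8)$ and negative on $(3/8, 1/2)$, the unique maximizer over subsets of $[1/4, 1/2)$ is $M_B = [1/4, 3/8)$, attaining joint profit $23t/32$. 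A final sanity check confirms feasibility: consumers in $[1/4, 3/8)$ gain $2t\theta - t/2 \geq 0$ and all others are unchanged. The main technical obstacle is the equilibrium-price argument of Step 3---since the equilibrium depends on the geometry of $M_B$ through $h$---but the uniform bound $(1-2\mu)\,h(\mu) \leq 1/8$ dispatches it cleanly.
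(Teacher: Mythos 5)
Your proof is correct, and its skeleton matches the paper's: rule out sharing consumers in $[0,1/4)$ (they would be forced onto a personalized price $t(1-2\theta)>t/2$ and be harmed), rule out sharing consumers in $(1/2,1]$ (it only lowers $B$'s price from $p_A+t(2\theta-1)$ to $t(2\theta-1)$), and then on $[1/4,1/2)$ compare the per-consumer profit $t(1-2\theta)$ under sharing with $t(2\theta-1/2)$ under no sharing, which flips sign exactly at $\theta=3/8$ --- your integrand $t(3/2-4\theta)$ is precisely that difference. Where you go beyond the paper is the equilibrium-price step: the paper's proof implicitly keeps $p_A=t/2$ after sharing without verifying that this remains firm $A$'s best response once part of its uniform-price demand has been carved out, whereas you prove it for \emph{every} feasible $M_B\subseteq[1/4,1/2)$ via the bound $(1-2\mu)\,h(\mu)\le(1-2\mu)\,\mu\le 1/8$, with equality only at $\mu=1/4$. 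This is a genuine strengthening, since a priori the optimality of $M_B=[1/4,3/8)$ could be upset if some other feasible $M_B$ induced a different uniform price; your argument closes that loophole cleanly and also makes the feasibility of the uniform-regime consumers ($p_A\le t/2$) automatic rather than assumed. The explicit joint-profit decomposition $\pi_A+\pi_B=\tfrac{11t}{16}+\int_{M_B}t\bigl(\tfrac{3}{2}-4\theta\bigr)\,d\theta$ and the resulting value $23t/32$ are a nice bonus the paper does not state.
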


\begin{proof}
Observe that, with no sharing, consumers in $[0,1/4)$ purchase from $A$ at uniform price $p_A=t/2$ and obtain utility
$v-t(\theta+1/2)$, whereas consumers
in $[1/4,1]$ purchase from $B$ at personalized price $p_B(\theta)=t(2\theta-1/2)$ and obtain utility
$v-t(2\theta-1/2)-t(1-\theta)=v-t(\theta+1/2)$. 

Now consider some data-sharing mechanism that does not harm any consumer. In order for the consumer's utility not to
decrease after sharing, 
one of the following three conditions must be satisfied:
\begin{enumerate}
\item The consumer purchases from the same firm as with no sharing, but at a (weakly) lower price.
\item The consumer switches to the other firm, and that other firm is closer to the consumer. The consumer may pay a higher price,
but the price increase is no higher than the savings in lower transportation costs.
\item The consumer switches to the other firm, and that other firm is farther from the consumer. The consumer pays a lower price,
and the price decrease is higher than the increase in transportation costs.
\end{enumerate}

No mechanism that maximizes joint firm profits will facilitate condition 3. Thus, in any jointly firm-optimal mechanism,
consumers in $[0,1/4)$ will purchase from $A$ and consumers in $(1/2, 1]$ will purchase from $B$.

Now, since sharing data about a consumer in $[0,1/4)$ will lead to a higher price for that consumer, it will no longer satisfy the
conditions above, and so no data can be shared
about such consumers. In addition, since sharing data about a consumer in $(1/2, 1]$ will lead to a lower price for that consumer,
no jointly firm-optimal mechanism will facilitate sharing about such consumers either. Thus, the only consumers about whom
data may be shared are those in $[1/4, 1/2]$.

Without data sharing, consumers in $[1/4, 1/2]$ purchase from $B$ at personalized price $t(2\theta-1/2)$. If $B$ shares data about
a consumer $\theta \in [1/4, 1/2]$, then in the resulting equilibrium that consumer will purchase from $A$ at personalized price
$t(1-2\theta)$. Note that this leads to consumer utility $v-t\theta - t(1-2\theta) = v-t(1-\theta) \geq v-t(\theta+1/2)$, where the right-hand-side
is the consumer's utility without sharing, and so consumer $\theta$
will be better off with sharing (an instance of condition 2 above). 
Firm profits also change: with no sharing, firm $B$ receives $t(2\theta-1/2)$,
whereas with sharing, firm $A$ receives $t(1-2\theta)$. However, $t(1-2\theta)> t(2\theta-1/2)$ for $\theta \in [1/4, 1/2]$ if and only if $\theta \in [1/4,3/8)$. Thus, the claimed mechanism is jointly firm-optimal.
\end{proof}


%

\section{Two Segments}\label{sec:two-segments}
Suppose now that there are two segments, $S_B$ and $S_A$, that consumers are evenly split between them, and that
a mass of $1/2$ of consumers is uniformly distributed on each segment.
Consider any mechanism $M$ from Section~\ref{sec:one-segment},
and let $M^2$ be the mechanism in which $B$ shares data with $A$ about consumers in $S_B$ as in $M$, and $A$ shares
data with $B$ about consumers in $S_A$ symmetrically as in $M$. What are the properties of $M^2$?

It is straightforward to see that $M^2$ inherits all the properties of $M$. More interestingly, if $M$ is jointly IR, then $M^2$ is IR.
Furthermore:
\begin{itemize}
\item The mechanism in which firm $B$ shares data on $S_B$ consumers in $[1/4, 1/2)$, and firm $A$ shares data on $S_A$ 
consumers in $(1/2, 3/4]$ is {\em IR}, weakly increases every consumer's welfare, strictly increases total consumer welfare, and leaves total firm profits unchanged. Furthermore, this mechanism maximizes consumer welfare subject to being jointly IR. This last statement holds because there is no deadweight loss, and the joint IR constraint binds. Thus,  this mechanism is not only Pareto-improving, but it is also consumer-optimal relative to all Pareto-improving mechanisms.
\item The mechanism in which firm $B$ shares data on $S_B$ consumers in $[1/4, 3/8)$, and firm $A$ shares data on $S_A$ 
consumers in $(5/8, 3/4]$ is IR, and maximizes the sum of firm profits subject to the constraint that every consumer is (weakly) better off
than under no sharing. Thus, this mechanism is not only Pareto-improving, but it is also jointly firm-optimal relative to all Pareto-improving mechanisms.
\end{itemize}

\section{Four Segments}\label{sec:four-segments}

Suppose now that there are four segments---$S_A$, $S_B$, $S_\emptyset$, and $S_{AB}$---and that consumers are evenly split between them. In this case the analysis is slightly more involved, since uniform prices need to take segment $S_{\emptyset}$ into account. In the following we consider four mechanisms: no data-sharing, full data-sharing, the consumer-optimal mechanism, and the firm-optimal mechanism.

\subsection{No Data-Sharing}
We begin with the case of no data-sharing. 
The main difference between the four-segment case and the one- and two-segment cases of the previous sections is that here,
when firm $A$ (resp., $B$) chooses a uniform price, that price no longer applies only to consumers in $S_B$ (resp., $S_A$),
but also to consumers on $S_{\emptyset}$. The firms will thus optimize their prices differently from the other settings, leading also
to different personalized pricing.

Before we begin our analysis, observe that both firms offer consumers on $S_{AB}$ personalized prices. By 
the analysis of \citet{taylor2014consumer} (restated as Proposition~\ref{prop:one-full} above),
these prices are $p_A(\theta) = \max\{t(1-2\theta),0\}$ and $p_B(\theta)=\max\{t(2\theta-1),0\}$.

Now suppose uniform prices $p_A$ and $p_B$ are fixed. Then firm $B$ 
will personalize a price to each consumer on $S_B$, if possible making the latter indifferent between buying from $A$ and from 
$B$. Similarly,  firm $A$ 
will personalize a price to each consumer on $S_A$, also making consumers indifferent between the personalized price and $B$'s
uniform price.
Thus, given $p_A$ and $p_B$, personalized prices are
$p_B(\theta) = \max\{0,  p_A+(2\theta-1)t\}$ and $p_A(\theta)=\max\{0,  p_B+(1-2\theta)t\}$. 
Observe that at these prices,
$S_B$ consumers in $[0, \mu_1)$ purchase from firm $A$, whereas $S_B$ consumers in $[\mu_1, 1]$ purchase from $B$, where
$$\mu_1 = \mu(p_A) = \frac{1}{2}-\frac{p_A}{2t}.$$
Furthermore, at these prices, $S_{\emptyset}$ consumers in $[0, \mu_3)$ purchase from firm $A$, whereas $S_{\emptyset}$ consumers in $[\mu_3, 1]$ purchase from $B$, where
$$\mu_3 = \mu(p_A,p_B) = \frac{1}{2}-\frac{p_A-p_B}{2t}.$$

Given this, firm $A$ maximizes its profit $p_A \cdot (\mu_1+\mu_3)$
by solving
$$\max_{p_A}p_A\left[\left(\frac{1}{2}-\frac{p_A}{2t}\right)+\left(\frac{1}{2}-\frac{p_A-p_B}{2t}\right)\right].$$
The first-order condition is
$$1-\frac{2p_A}{t} + \frac{p_B}{2t} = 0.$$
The symmetric case for firm $B$ has the symmetric first-order condition
$$1-\frac{2p_B}{t} + \frac{p_A}{2t} = 0,$$
and the solution to this system is
$$p_A=p_B = \frac{2t}{3}.$$
Observe that, at these prices, the indifferent consumer on $S_B$ is at $1/6$, the indifferent consumer on $S_A$ is at $5/6$,
and the indifferent consumers on $S_{\emptyset}$ and $S_{AB}$ are at $1/2$.

Finally, at these prices firm profits are 
$$\pi_A = \frac{1}{4}\left[\frac{1}{6}\cdot \frac{2t}{3} + \int_0^{5/6}\left(\frac{2t}{3}+(1-2\theta)t\right)d\theta
+ \frac{1}{2}\cdot \frac{2t}{3} + \int_0^{1/2}(1-2\theta)t d\theta\right],$$
where the additive terms refer to profits from $S_B$, $S_A$, $S_{\emptyset}$, and $S_{AB}$, respectively.
These profits, which, due to symmetry, hold also for firm $B$, are
$$\pi_A=\pi_B=\frac{25t}{72}.$$

To simplify the analysis of consumer welfare, observe that consumers on $S_B$ are indifferent between purchasing
from $A$ at price $p_A=2t/3$ and purchasing from $B$ at personalized price $p_B(\theta)$. Similarly, consumers on $S_A$
are indifferent between purchasing from $B$ at price $p_B=2t/3$ and purchasing from $A$ at personalized price $p_A(\theta)$.
Due to symmetry, conditional on $S_B$ (resp., $S_A$), consumer welfare is thus
$$\int_0^1\left(v-\theta t - \frac{2t}{3}\right)d\theta = v - \frac{7t}{6}.$$
Conditional on $S_{\emptyset}$, consumer welfare is
$$\int_0^{1/2}\left(v-\theta t - \frac{2t}{3}\right)d\theta +\int_{1/2}^1\left(v-(1-\theta) t - \frac{2t}{3}\right)d\theta 
= v - \frac{11t}{12}.$$
Finally, conditional on $S_{AB}$, consumer welfare is
$$\int_0^{1/2}\left(v-\theta t - t(1-2\theta)\right)d\theta +\int_{1/2}^1\left(v-(1-\theta) t - t(2\theta-1)\right)d\theta 
= v - \frac{3t}{4}.$$
Overall, consumer welfare is thus
$$CW = v - \frac{1}{4}\left[\frac{7t}{6} + \frac{7t}{6} + \frac{11t}{12} + \frac{3t}{4} \right]
 = v-t.$$
This is summarized in the following proposition:
\begin{proposition}\label{prop:four-none}
In the four-segment model with no data-sharing, profits are
$\pi_A=\pi_B=25t/72$ and consumer welfare is $CW = v-t$.
\end{proposition}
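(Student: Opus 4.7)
The plan is to decompose the equilibrium computation by segment, since the four segments decouple once we fix uniform prices: $S_{AB}$ behaves like full sharing, $S_B$ and $S_A$ behave like the single-segment case of Section~\ref{sec:one-segment}, and $S_{\emptyset}$ behaves like classical Hotelling with no personalization. First, I would invoke Proposition~\ref{prop:one-full} directly on $S_{AB}$, where both firms know every location, obtaining personalized prices $p_A(\theta)=\max\{t(1-2\theta),0\}$ and $p_B(\theta)=\max\{t(2\theta-1),0\}$ and contributing $t/4$ to each firm's revenue on that segment.

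Next, for any candidate pair of uniform prices $(p_A,p_B)$, I would derive best-response personalized prices on $S_B$ and $S_A$. On $S_B$, firm $B$ sees the location and undercuts firm $A$'s uniform outside option, setting $p_B(\theta)=\max\{0,p_A+(2\theta-1)t\}$; the indifferent consumer is at $\mu_1=\tfrac12-\tfrac{p_A}{2t}$. Symmetrically on $S_A$. On $S_{\emptyset}$ neither firm personalizes, so the usual Hotelling cutoff is $\mu_3=\tfrac12-\tfrac{p_A-p_B}{2t}$. The crucial step is that firm $A$'s uniform price now applies to two groups of consumers (the left portion of $S_B$ and of $S_{\emptyset}$), so its uniform-price revenue is $\tfrac{1}{4}p_A(\mu_1+\mu_3)$, while the personalized components on $S_A$ and $S_{AB}$ are independent of $p_A$.

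I would then write the first-order condition with respect to $p_A$, which gives $1-\tfrac{2p_A}{t}+\tfrac{p_B}{2t}=0$, impose the symmetric condition for $p_B$, and solve to obtain $p_A=p_B=2t/3$. From there the four segment-cutoffs ($1/6$ on $S_B$, $5/6$ on $S_A$, and $1/2$ on both $S_{\emptyset}$ and $S_{AB}$) are immediate, and I would integrate: firm $A$'s revenue is the weighted sum of uniform revenue on $[0,1/6]\subseteq S_B$ and $[0,1/2]\subseteq S_{\emptyset}$, personalized revenue on $[0,5/6]\subseteq S_A$, and personalized revenue on $[0,1/2]\subseteq S_{AB}$, each multiplied by the segment mass $1/4$, yielding $25t/72$; firm $B$'s profit matches by symmetry. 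Consumer welfare follows by computing the average of four per-segment welfare integrals, using the fact that on $S_B$ and $S_A$ the personalized price keeps each consumer's utility equal to the uniform-price outside option $v-\theta t - 2t/3$.

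The main obstacle is bookkeeping rather than any deep argument: one must verify that the positivity constraints in the $\max\{0,\cdot\}$ expressions do not bind on the relevant ranges (so that the interior first-order condition is valid), and that each cutoff lies in $[0,1]$ so the proposed equilibrium is internally consistent. A secondary check is that neither firm has a profitable deviation to a uniform price that would flip which segments it serves with uniform versus personalized prices; this reduces to confirming local concavity of the profit expression at $p_A=p_B=2t/3$, which the quadratic form guarantees.
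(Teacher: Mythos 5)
Your proposal is correct and follows essentially the same route as the paper's own proof: fix uniform prices, derive the personalized best responses and cutoffs $\mu_1,\mu_3$ segment by segment, obtain the first-order conditions $1-\tfrac{2p_A}{t}+\tfrac{p_B}{2t}=0$ and its mirror, solve for $p_A=p_B=2t/3$, and then integrate profits and per-segment welfare. The only addition is your explicit remark about checking that the $\max\{0,\cdot\}$ constraints and cutoff bounds do not bind, which the paper leaves implicit.
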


\subsection{Full Data-Sharing}
Suppose now that the firms engage in full data-sharing, in which firm $B$ shares all its $S_B$ data with firm $A$ and
firm $A$ shares all its $S_A$ data with firm $B$.
Unlike the case of a single segment, it is not immediately clear that full data-sharing harms firms. With a single segment,
full data-sharing leads to more competition over every consumer, thus driving down prices and profits. In contrast, when
there are four segments, there is also an opposing force: Because full data-sharing leads to direct competition on each consumer
on segments $S_B$, $S_A$, and $S_{AB}$, it actually drives up uniform prices, leading to greater profits from consumers on $S_{\emptyset}$.
Nonetheless, as Proposition~\ref{prop:four-full} below states, this positive effect on profits does not suffice, and overall 
firms are worse off after sharing.

With full data-sharing, both firms know the location of every consumer on  $S_B$, $S_A$, and $S_{AB}$, 
and so both engage in personalized
pricing for these consumers. Neither knows the locations of consumers on $S_{\emptyset}$, so uniform prices apply to them. 
The former setting is analyzed by \citet{taylor2014consumer} (restated as Proposition~\ref{prop:one-full} above), who show that profits are 
$\pi_A=\pi_B=t/4$, and consumer welfare is $CW = v-3t/4$. The latter is the standard Hotelling game setting, where profits are 
$\pi_A=\pi_B=t/2$, and consumer welfare is $CW = v-5t/4$. Overall, under full data-sharing
profits are thus
$$\pi_A=\pi_B = \frac{3}{4}\cdot \frac{t}{4} + \frac{1}{4}\cdot\frac{t}{2} = \frac{5t}{16}$$
and consumer welfare is
$$CW = v - \frac{3}{4}\cdot \frac{3t}{4} - \frac{1}{4}\cdot\frac{5t}{4} = v - \frac{7t}{8}.$$
This is summarized in the following proposition.
\begin{proposition}\label{prop:four-full}
Under full data-sharing, personalized prices for consumers in $S_B$, $S_A$, and $S_{AB}$ are
$$p_A(\theta) = \max\{t(1-2\theta),0\}~~~and~~~p_B(\theta)=\max\{t(2\theta-1),0\},$$
 uniform prices (which apply to consumers in $S_{\emptyset}$) are $p_A=p_B = t$,
profits are $\pi_A=\pi_B=5t/16$, and consumer welfare is $CW = v-7t/8$.
\end{proposition}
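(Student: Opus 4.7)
The plan is to exploit the observation that, under full data-sharing, the four segments decouple into two independent pricing problems: one involving only personalized prices (on $S_B$, $S_A$, and $S_{AB}$) and one involving only uniform prices (on $S_\emptyset$). Once this decoupling is justified, the result follows by applying Proposition~\ref{prop:one-full} to the first problem and the textbook Hotelling analysis to the second, then aggregating.

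First I would argue the decoupling. Under full sharing, both firms know the location of every consumer on $S_B$, $S_A$, and $S_{AB}$; by the timing of the game, both firms then simultaneously offer personalized prices to each such consumer, and these offers subsume the uniform price. Consequently, a consumer on any of these three segments never faces a uniform price from either firm, and when either firm contemplates its personalized price for such a consumer the relevant outside option is the other firm's personalized price, not the other firm's uniform price. This means the personalized pricing decisions on $S_B\cup S_A\cup S_{AB}$ are independent of the chosen uniform prices. Symmetrically, a firm's uniform price affects only its $S_\emptyset$ revenue, because on the other three segments that firm's personalized offer supersedes it.

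Given the decoupling, I would invoke Proposition~\ref{prop:one-full} segment by segment: on each of $S_B$, $S_A$, and $S_{AB}$ the equilibrium personalized prices are $p_A(\theta)=\max\{t(1-2\theta),0\}$ and $p_B(\theta)=\max\{t(2\theta-1),0\}$, yielding per-unit-mass profits of $t/4$ for each firm and per-unit-mass consumer welfare $v-3t/4$. For $S_\emptyset$, the only active prices are the uniform prices $p_A$ and $p_B$; since this subgame is literally standard Hotelling on a unit mass of uniformly distributed consumers (rescaled by the segment mass $1/4$), the best-response equations yield $p_A=p_B=t$, per-unit-mass profits $t/2$ for each firm, and per-unit-mass consumer welfare $v-5t/4$. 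I would briefly verify that these uniform prices are indeed mutual best responses by noting that the uniform price appears in the profit function only through the $S_\emptyset$ term, so each firm's first-order condition is exactly the standard one.

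Finally, I would aggregate using the fact that each segment has mass $1/4$:
\begin{align*}
\pi_A=\pi_B&=\tfrac{3}{4}\cdot\tfrac{t}{4}+\tfrac{1}{4}\cdot\tfrac{t}{2}=\tfrac{5t}{16},\\
CW&=\tfrac{3}{4}\bigl(v-\tfrac{3t}{4}\bigr)+\tfrac{1}{4}\bigl(v-\tfrac{5t}{4}\bigr)=v-\tfrac{7t}{8}.
\end{align*}
The main conceptual obstacle is making the decoupling argument airtight: specifically, I must be careful to show that a firm cannot profitably deviate by choosing a strange uniform price in order to manipulate the personalized pricing subgame on the shared segments. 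The key point is that because the personalized offers are made after uniform prices are announced and subsume them, there is no channel through which a firm's own uniform price can alter its opponent's personalized pricing on $S_B\cup S_A\cup S_{AB}$; once this is spelled out, the rest is routine application of results already established in the paper.
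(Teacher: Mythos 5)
Your proof is correct and follows essentially the same route as the paper: decompose into the three fully-known segments (handled by Proposition~\ref{prop:one-full}) and the standard Hotelling subgame on $S_\emptyset$, then aggregate with segment weights $3/4$ and $1/4$. Your explicit justification of the decoupling (that uniform prices enter each firm's profit only through the $S_\emptyset$ term) is a bit more careful than the paper, which simply asserts the decomposition, but the argument is the same.
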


Observe that, as in the one- and two-segment settings, firms are better off with no sharing than with full sharing, whereas
consumers are better off with full sharing than with no sharing.


\subsection{Consumer-Optimal Mechanism}
In this section we design an IR mechanism that maximizes consumer welfare subject to satisfying the joint IR constraint.
The mechanism is a variant of mechanism 3 from Section~\ref{sec:one-segment}, 
in which firm $B$ shares data on consumers in $[1/4, 1/2)$. Recall that in the one-segment case, when there is no data-sharing, 
firm $A$ sells to consumers in $[0,1/4)$. Thus, sharing data about consumers in $[1/4, 1/2)$ causes them to switch to firm
$A$, but does not affect firm $A$'s uniform price (see Proposition~\ref{prop:one-eps} with $\eps=1/4$).
In the four-segment setting, however, the only $S_B$ consumers who purchase from $A$ are in $[0,1/6)$. Thus,
in our mechanism, firm $B$ will share data about $S_B$ consumers in $[1/6,1/2)$. Symmetrically, firm $A$ will share
data about $S_A$ consumers in $(1/2, 5/6]$.

\begin{proposition}\label{prop:four-firm-consumer-opt}
Consider the mechanism in which firm $B$ shares with firm $A$ the locations of $S_B$ consumers in $[1/6,1/2)$,
and  firm $A$ shares
with firm $B$ the locations of $S_A$ consumers in $(1/2, 5/6]$. In equilibrium, firms' uniform prices are $p_A=p_B=2t/3$.
This mechanism is IR, beneficial to every consumer, and maximizes consumer welfare relative to all jointly IR mechanisms.
\end{proposition}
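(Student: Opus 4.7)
The plan proceeds in three stages: verify that $p_A = p_B = 2t/3$ is an equilibrium of the post-sharing game, confirm IR and the Pareto improvement by computing profits and comparing utilities consumer-by-consumer, and then prove consumer-optimality via a welfare-accounting identity.

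For the equilibrium, I would fix $p_A = p_B = 2t/3$ and derive best-response personalized prices segment by segment. On $S_{AB}$ the Bertrand analysis of Proposition~\ref{prop:one-full} applies directly. On the shared portion $[1/6, 1/2)$ of $S_B$ both firms can personalize, so by undercutting $A$ wins at $p_A(\theta) = t(1 - 2\theta)$. On $[1/2, 1]$ only $B$ personalizes, setting $p_B(\theta) = p_A + t(2\theta - 1)$ to leave the consumer indifferent with $A$'s uniform; on $[0, 1/6)$ the analogous formula would give $B$ a negative price, so $B$ cannot profitably poach and these consumers buy from $A$ at $2t/3$. The threshold $1/6$ is chosen precisely to coincide with the boundary at which $B$'s poaching first becomes profitable, which is what will make $p_A = 2t/3$ a best response. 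A symmetric analysis handles $S_A$. To rule out uniform-price deviations I would write $A$'s profit as a function of $p_A$ with $p_B = 2t/3$ fixed, decomposing across segments into an $S_B$ term that is piecewise-defined with a kink at $p_A = 2t/3$ (corresponding to whether $B$ poaches any of $[0, 1/6)$), a standard Hotelling $S_\emptyset$ term $p_A\bigl(\tfrac12 - \tfrac{p_A - p_B}{2t}\bigr)$, and $S_A, S_{AB}$ contributions independent of $p_A$; a first-order-condition check on each side of the kink then shows $p_A = 2t/3$ is optimal, and symmetry yields the same for $B$.

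With the equilibrium in hand I would integrate across the four segments to obtain $\pi_A = 25t/72$, matching $A$'s no-sharing profit from Proposition~\ref{prop:four-none}; symmetry gives the same for $B$. The mechanism is therefore IR with equality, and the joint IR constraint binds. For the consumer-by-consumer comparison on $S_B$: consumers in $[0, 1/6)$ and $[1/2, 1]$ see the same firm at the same price as under no sharing (since $A$'s uniform price has not moved, neither does $B$'s personalized price on $[1/2, 1]$); consumers in $[1/6, 1/2)$ switch from $B$ at $t(2\theta - 1/3)$ to $A$ at $t(1 - 2\theta)$, yielding a utility gain of exactly $t(2\theta - 1/3) \geq 0$, strict except at $\theta = 1/6$. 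Symmetry handles $S_A$, while $S_\emptyset$ and $S_{AB}$ are untouched, so every consumer is weakly better off and some strictly so.

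For consumer-optimality I would use the accounting identity
\[ CW = v - (\pi_A + \pi_B) - T, \]
valid because $v > 2t$ guarantees full market coverage, where $T$ denotes aggregate transportation cost. Joint IR forces $\pi_A + \pi_B \geq 25t/36$, and $T \geq t/4$ with equality iff every consumer purchases from the closer firm. Hence any jointly IR mechanism satisfies $CW \leq v - 25t/36 - t/4 = v - 17t/18$, and the proposed mechanism attains both bounds simultaneously, since joint IR binds and the indifferent consumer on every segment sits at $1/2$ (making the matching transportation-minimizing). The main obstacle is the equilibrium step: the kink in $A$'s uniform-price profit at $p_A = 2t/3$ forces a careful one-sided first-order-condition analysis on each side of the kink, and one must confirm that the Bertrand-style personalized responses on the shared region do not shift the first-order condition in $p_A$; once this is settled the profit and welfare computations are routine integrations, and the optimality argument is a short bound-matching exercise.
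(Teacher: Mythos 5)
Your proposal is correct and follows essentially the same route as the paper: the same two-region (kinked) analysis of the uniform-price deviation showing $p_A=p_B=2t/3$ is the unique equilibrium, IR via the exact transfer of revenue on the switched consumers, and consumer-optimality from the binding joint-IR constraint plus the absence of deadweight loss. Your explicit accounting identity $CW = v - (\pi_A+\pi_B) - T$ and the consumer-by-consumer utility comparison merely make precise what the paper states more informally, so no substantive difference remains.
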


For some intuition, consider firm $A$'s uniform price, which applies to consumers on $S_B$ and on $S_{\emptyset}$. 
Without data sharing, that price is $p_A=2t/3$, and $S_B$ consumers $[0,1/6)$ buy from $A$. After data sharing, what
 can firm $A$ gain by changing $p_A$? Observe first that lowering $p_A$ cannot be helpful; it does not affect the
 number of consumers on $S_B$ that purchase from $A$---since consumers $[1/6,1/2)$ anyway purchase from $A$ at personalized 
 prices---but only lowers revenue from those consumers. Lowering $p_A$ does increase $A$'s market share of consumers on $S_{\emptyset}$,
 but at lower revenue. However, $p_A=2t/3$ optimally balances this tradeoff, as it does in the no-sharing case.
 
 Similarly, raising $p_A$ cannot be helpful. If $p_B=2t/3$, then without sharing $p_A=2t/3$ is a best response, 
 by Proposition~\ref{prop:four-none}. Any profitable deviation by $A$ to a higher $p_A$ will also be profitable
 under no sharing, contradicting the fact that $2t/3$ is an equilibrium uniform price. Thus, $p_A=p_B=2t/3$ are
  equilibrium uniform prices under this sharing mechanism. In the proof, we show that these uniform prices
  are in fact the unique equilibrium uniform prices.

\begin{proof}
We begin with uniform prices. Recall that, as claimed in Proposition~\ref{prop:four-none}, when there is no data sharing
then $p_A=p_B=2t/3$, and the first indifferent consumer on $S_B$ is at $1/6$.

Fix some arbitrary potential uniform prices $p_A$ and $p_B$. Firm $A$'s uniform price will only potentially apply to
$S_B$ consumers in $[0,1/6)$, since $S_B$ consumers in $[1/6,1/2)$ will pay $A$'s personalized price,
whereas $S_B$ consumers in $[1/2,1]$ will pay $B$'s personalized price. In addition, $A$'s uniform price
will potentially apply to all $S_{\emptyset}$ consumers. Thus, if firm $A$ chooses a uniform price less than $2t/3$, then it can only
increase its consumer base on $S_{\emptyset}$ but not on $S_B$. If $A$ chooses a uniform price greater than $2t/3$, then
it potentially shrinks its consumer base on both $S_B$ and $S_{\emptyset}$.

Overall, firm $A$ maximizes its profit by solving for the larger of
$$\max_{p_A\in[0,2t/3)}p_A\left[\frac{1}{6}+\left(\frac{1}{2}-\frac{p_A-p_B}{2t}\right)\right]$$
or
$$\max_{p_A\in[2t/3,\infty)}p_A\left[\left(\frac{1}{2}-\frac{p_A}{2t}\right)+\left(\frac{1}{2}-\frac{p_A-p_B}{2t}\right)\right].$$
The first maximization has a corner solution at $p_A=2t/3$, and the second has an interior solution
at
$$p_A=\frac{t}{2}+\frac{p_B}{4}.$$

The same solutions apply symmetrically to firm $B$. Regardless of whether $A$'s uniform price is $p_A=2t/3$
or $p_A=t/2+p_B/4$,
plugging $p_A$ into  $B$'s first-order condition
$$p_B=\frac{t}{2}+\frac{p_A}{4}$$
yields the unique solution $p_A=p_B=2t/3$.

These uniform prices, together with the corresponding personalized prices, affect firms' profits relative to no sharing
only on $S_B$ consumers in $[1/6,1/2)$ and $S_A$ consumers in $(1/2,5/6]$. This is because uniform prices are the same,
and so all other consumers face the same prices as with no sharing. How do profits change on these two subsegments?
Consider $S_B$ consumers in $[1/6,1/2)$. With no sharing, each such consumer $\theta$ purchased from $B$
at personalized price $p_B(\theta) = p_A+(2\theta-1)t = t(2\theta-1/3)$. With sharing, each such consumer purchases from
$A$ at personalized price $p_A(\theta)=t(1-2\theta)$. However,
$$\int_{1/6}^{1/2}t(2\theta-1/3)d\theta=\int_{1/6}^{1/2}t(1-2\theta)d\theta,$$
and so revenue lost by $B$ on $S_B$ is exactly recovered by $A$. Symmetrically, revenue lost by $A$ on $S_A$
is exactly recovered by $B$. This implies that both $A$ and $B$ are indifferent between sharing and no sharing,
and so the mechanism is IR.

Finally, the fact that the mechanism maximizes consumer welfare relative to all jointly IR mechanisms follows from the observations that the joint-IR constraint binds, and that deadweight loss is minimal since all consumers purchase from the closer firm.
\end{proof}

\subsection{Firm-Optimal Mechanism}
In this section we design an IR mechanism that maximizes joint firm profits subject to not harming any consumers.
The mechanism is a variant of the mechanism from Section~\ref{sec:one-firm-opt}, 
in which firm $B$ shares data on consumers in $[1/4, 3/8)$. Again, recall that in the one-segment case, when there is no data-sharing, 
firm $A$ sells to consumers in $[0,1/4)$. Thus, sharing data about consumers in $[1/4, 3/8)$ causes them to switch to firm
$A$, but does not affect firm $A$'s uniform price (see Proposition~\ref{prop:one-firm-opt}). It also only causes those consumers 
that increase total firm profits to switch.
In the four-segment setting, however, the only $S_B$ consumers who purchase from $A$ are in $[0,1/6)$. Thus,
in our mechanism, firm $B$ will share data about $S_B$ consumers in $[1/6,1/3)$. Symmetrically, firm $A$ will share
data about $S_A$ consumers in $(1/3, 5/6]$.

\begin{proposition}\label{prop:four-firm-opt}
Consider the mechanism in which firm $B$ shares with firm $A$ the locations of $S_B$ consumers in $[1/6,1/3)$,
and  firm $A$ shares
with firm $B$ the locations of $S_A$ consumers in $(2/3, 5/6]$.
This mechanism is Pareto-improving and jointly firm-optimal relative to all mechanisms that are
weakly beneficial to every consumer.
\end{proposition}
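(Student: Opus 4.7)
My approach would mirror the three-stage structure of Proposition~\ref{prop:one-firm-opt}, adapted to the richer four-segment setting. The first step is to characterize the equilibrium under the proposed mechanism, analogously to Proposition~\ref{prop:four-firm-consumer-opt}. I expect $p_A=p_B=2t/3$: the shared block $[1/6,1/3)$ on $S_B$ is served by $A$ at personalized prices $p_A(\theta)=t(1-2\theta)$ (as in Proposition~\ref{prop:one-full}), while the non-shared block $[1/3,1/2)$ is still served by $B$ at $p_B(\theta)=t(2\theta-1/3)$, which makes the marginal consumer indifferent with $A$'s uniform price. Firm $A$'s uniform price would then apply only to $S_\emptyset$ consumers and to $S_B$ consumers in $[0,1/6)$, so its best-response problem reduces to the same piecewise optimization as in Proposition~\ref{prop:four-firm-consumer-opt}, yielding $p_A=2t/3$; symmetry gives $p_B=2t/3$.

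Next, I would verify Pareto-improvement by direct calculation. The only consumers whose allocation changes are $S_B$ consumers in $[1/6,1/3)$ (switching from $B$ to $A$) and the mirror block on $S_A$. Substituting prices into utilities shows each such consumer gains $t(2\theta-1/3)\geq 0$, strict for $\theta>1/6$. For firm profits, the change of variable $\theta\mapsto 1-\theta$ makes the $S_A$ contribution the mirror image of the $S_B$ contribution, so each firm's net change equals $\int_{1/6}^{1/3} t(4/3-4\theta)\,d\theta$, which is strictly positive. Thus no consumer is harmed, some are strictly helped, and each firm strictly gains, establishing Pareto-improvement.

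The main work lies in the third stage, establishing joint firm-optimality by adapting the three-conditions argument from Proposition~\ref{prop:one-firm-opt}. For any consumer-weakly-beneficial mechanism with induced equilibrium uniform prices $p_A',p_B'$, I would first note that the no-harm constraint applied to $S_\emptyset$ consumers, who have no alternative to the uniform price, forces $p_A',p_B'\leq 2t/3$. The three-conditions argument then rules out any consumer switching to the farther firm in a joint-profit-maximizing mechanism, so on $S_B$ the consumers in $[0,1/6)$ must still buy from $A$ and those in $(1/2,1]$ must still buy from $B$; the remaining shareable range is $[1/6,1/2]$, within which the joint profit change from switching a consumer $\theta$ from $B$'s revenue $t(2\theta-1/3)$ to $A$'s revenue $t(1-2\theta)$ is $t(4/3-4\theta)$, positive iff $\theta<1/3$. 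Symmetry yields the analogous conclusion for $S_A$. The subtle point, which is absent from the single-segment case and which I expect to be the main obstacle, is that the sharing sets and the equilibrium uniform prices are coupled through their joint effect on $S_\emptyset$ profits. To close the argument I would show that reducing a uniform price below $2t/3$ strictly decreases $S_\emptyset$ profits without generating a compensating first-order gain on $S_B$ or $S_A$ (since $S_B$ consumers served by $B$ via personalization are insensitive to a small perturbation of $p_A$, and analogously on $S_A$). This pins optimal uniform prices at $2t/3$ and reduces the optimization to the per-consumer comparison above.
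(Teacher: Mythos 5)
Your plan follows the same route as the paper's proof: pin the uniform prices at $2t/3$, verify the per-consumer welfare and profit changes on the switched blocks, and reduce joint firm-optimality to the three-conditions argument of Proposition~\ref{prop:one-firm-opt} plus the per-consumer revenue comparison $t(1-2\theta)$ versus $t(2\theta-1/3)$, which gives the cutoff $\theta<1/3$. Your Pareto-improvement computations are correct, and your explicit treatment of the coupling between the sharing sets and the uniform prices in the optimality step is, if anything, more careful than the paper's rather terse closing argument.

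There is, however, one concrete gap in your first stage. You assert that firm $A$'s best-response problem ``reduces to the same piecewise optimization as in Proposition~\ref{prop:four-firm-consumer-opt}.'' It does not: in the consumer-optimal mechanism all of $[1/6,1/2)$ on $S_B$ is shared, so those consumers are locked into personalized pricing and $A$'s uniform price can never reach them; here only $[1/6,1/3)$ is shared, and the consumers in $[1/3,1/2)$ are still served by $B$ at the personalized price $p_B(\theta)=p_A+t(2\theta-1)$, which is keyed to $A$'s uniform price. Firm $A$ therefore has an additional class of deviations, $p_A<t/3$, under which it poaches some of $[1/3,1/2)$ away from $B$'s personalized offers. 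The paper handles this by adding a third branch $\max_{p_A\in[0,t/3)}p_A\bigl[\bigl(\tfrac{1}{2}-\tfrac{p_A}{2t}-\tfrac{1}{6}\bigr)+\bigl(\tfrac{1}{2}-\tfrac{p_A-p_B}{2t}\bigr)\bigr]$ to the piecewise maximization and checking that its corner value at $p_A=t/3$ (profit $5t/18$) is dominated by the profit $4t/9$ at $p_A=2t/3$. Your later remark that lowering the uniform price yields no compensating ``first-order'' gain does not cover this, since the relevant deviation is a discrete jump below $t/3$, not a small perturbation. The conclusion survives, but this branch must be checked explicitly for the equilibrium characterization to be complete.
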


The intuition is similar to that of Proposition~\ref{prop:four-firm-consumer-opt} above.
Without data sharing, $A$'s uniform price is $p_A=2t/3$, and $S_B$ consumers $[0,1/6)$ buy from $A$. After data sharing, 
firm $A$ cannot gain by raising $p_A$, as in Proposition~\ref{prop:four-firm-consumer-opt}. However, unlike 
Proposition~\ref{prop:four-firm-consumer-opt}, here firm $A$ could increase its market share on $S_B$ by lowering prices,
since only $S_B$ consumers $[1/6,1/3)$ buy from $A$ at personalized prices. If $A$ were to lower prices enough, then
potentially also $S_B$ consumers in $[1/3,1/2)$, who currently buy from $B$ at personalized prices, may switch to $A$.
Of course, this comes at considerable cost to $A$, as the uniform price needs to be lower than $t/3$ in order to attract these consumers.
This is unprofitable, and so $p_A=p_B=2t/3$ remain the equilibrium prices.
Furthermore, in the equilibrium here no consumers are harmed, and firm profits are maximized subject to no consumers
being harmed, for the same reason as in the one-segment case of Proposition~\ref{prop:four-firm-consumer-opt}.

\begin{proof}
We begin with uniform prices. Recall that, as claimed in Proposition~\ref{prop:four-none}, when there is no data sharing
then $p_A=p_B=2t/3$, and the first indifferent consumer on $S_B$ is at $1/6$.

Firm $A$'s uniform price will only potentially apply to
$S_B$ consumers in $[0,1/6)\cup [1/3,1/2)$, since $S_B$ consumers in $[1/6,1/3)$ will pay $A$'s personalized price,
whereas $S_B$ consumers in $[1/2,1]$ will pay $B$'s personalized price. In addition, $A$'s uniform price
will potentially apply to all $S_{\emptyset}$ consumers. If firm $A$ chooses a uniform price less than $2t/3$ but greater than $t/3$, then it can only
increase its consumer base on $S_{\emptyset}$ but not on $S_B$. If firm $A$ chooses a uniform price less than $t/3$, then it can 
increase its consumer base on $S_{\emptyset}$ and also $S_B$, getting some of the consumers in $[1/3,1/2)$. If $A$ chooses a uniform price greater than $2t/3$, then it potentially shrinks its consumer base on both $S_B$ and $S_{\emptyset}$.

Overall, firm $A$ maximizes its profit by solving for the largest of
$$\max_{p_A\in[0,t/3)}p_A\left[\left(\frac{1}{2}-\frac{p_A}{2t}-\frac{1}{6}\right)+\left(\frac{1}{2}-\frac{p_A-p_B}{2t}\right)\right],$$
$$\max_{p_A\in[t/3,2t/3)}p_A\left[\frac{1}{6}+\left(\frac{1}{2}-\frac{p_A-p_B}{2t}\right)\right],$$
or
$$\max_{p_A\in[2t/3,\infty)}p_A\left[\left(\frac{1}{2}-\frac{p_A}{2t}\right)+\left(\frac{1}{2}-\frac{p_A-p_B}{2t}\right)\right].$$
The first maximization has a corner solution at $p_A=t/3$,
the  second maximization has a corner solution at $p_A=2t/3$, and the third has an interior solution
at
$$p_A=\frac{t}{2}+\frac{p_B}{4}.$$

The same solutions apply symmetrically to firm $B$, leading to a unique equilibrium with $p_A=p_B=2t/3$.

These uniform prices, together with the corresponding personalized prices, affect firms' profits relative to no sharing
only on $S_B$ consumers in $[1/6,1/3)$ and $S_A$ consumers in $(1/2,2/3]$. This is because uniform prices are the same,
and so all other consumers face the same prices as with no sharing. How do profits change on these two subsegments?

Consider some consumer $\theta\in [1/6, 1/2)$ on $S_B$. If such a consumer were to face personalized pricing by both $A$ and $B$,
then he would choose $A$ and pay personalized price 
$p_A(\theta)=t(1-2\theta)$. On the other hand, if $A$ does not know the consumer's location, then that consumer faces $A$'s uniform price $p_A=2t/3$ and $B$'s personalized price $p_B(\theta)=t(2\theta-1/3)$, and chooses the latter. Now, observe that the
set $[1/6,1/3)$ is precisely the subset of consumers in $[1/6,1/2)$ for whom total firm profits $p_A(\theta)=t(1-2\theta)$ are higher
than $p_B(\theta)=t(2\theta-1/3)$. Thus, the mechanism under consideration maximizes total firm profits subject to
only affecting $S_B$ consumers in $[1/6, 1/2)$ and, symmetrically, $S_A$ consumers in $(1/2,5/6]$.

Next, as in Proposition~\ref{prop:one-firm-opt}, the only other way for firms to increase profits, while not harming any consumer, 
is to cause other consumers to switch
to the closer firm and charge a higher price. However, all other consumers are already purchasing from the closest firm. Thus,
this mechanism maximizes total firm profits subject to not harming any consumer.
\end{proof}

\section{Conclusion}\label{sec:conclusion}
In this paper we analyzed a Hotelling  model of imperfect competition, and showed that  data sharing need not be a zero sum endeavor---benefiting consumers at the expense of firms, or vice versa. In contrast, we designed mechanisms for partial data-sharing that are Pareto-improving, simultaneously benefiting individuals as well as firms. In our analysis we utilized standard assumptions,
such as a uniform distribution of consumer locations. In Appendix~\ref{sec:non-uniform} we show that our results are robust to more general distributions.

Our results have some  implications for regulation. In particular, since data sharing can be beneficial to all market participants, privacy regulation that limits data sharing can harm market participants by preventing such mutually beneficial 
sharing.\footnote{See, for example, the analysis of  \cite{gal2019competitive} on the effects of data-sharing limitations within the EU's General Data Protection Regulation (GDPR).} Of course, privacy regulation is multi-faceted, and its effects on data sharing should be traded off against its many benefits. Regulators thus face the challenging task of avoiding this potential harm and instead steering firms' data sharing in a Pareto-improving direction.


\bibliographystyle{ims}
\bibliography{hotellingDS}
\newpage
\appendix
\begin{center}
\begin{Large}
\textbf{Appendix}
\end{Large}
\end{center}

\section{Robustness: Non-Uniform Consumer Distributions}\label{sec:non-uniform}
The uniformity assumption makes computations tractable; however, as noted by \cite{belleflamme2015industrial}, it often does 
not perform well empirically.
Several papers study Hotelling games with non-uniform distributions \citep[including][]{shilony1981hotelling,tabuchi1995asymmetric,anderson1997location,benassi2008elasticity,azar2015linear}, focusing on conditions under which equilibria exist. As a robustness check on the mechanisms presented in the body of the paper, in this section we study data sharing under such non-uniform distributions.

The setting we study includes arbitrary proportions of consumers on each of the segments, and an arbitrary
distribution of consumers within each segment. As our concern is orthogonal to the issue of equilibrium existence, our results in this section are of the following form. Suppose that when
there is no sharing then there is a pure-strategy equilibrium that yields profits $\pi_A$ and $\pi_B$, and consumer
welfare $CW$. Then there is a Pareto-improving mechanism and jointly firm-optimal relative to all Pareto-improving mechanisms. 
  

More formally, the general model is as follows. There are four segments, labeled $S_i$ for $i\in\{A,B,\emptyset,AB\}$, as in
the uniform model. The prior probability that a consumer is in segment $S_i$ is $q_i$, where $\sum_i q_i = 1$.
Furthermore, the distribution of consumers on each segment $S_i$ has pdf $f_i$ and cdf $F_i$, where $F_i(1)=q_i$.
Note that this general model can capture all our prior settings by considering pdfs that are uniform on each segment,
and additionally setting $q_A=q_{\emptyset}=q_{AB}=0$ for the one-segment model and $q_{\emptyset}=q_{AB}=0$ for the two-segment model.

Our result below, Proposition~\ref{prop:general-simple-case}, considers mild conditions on the distributions of consumers---namely, that they are sufficiently ``balanced''.
Under these conditions, we construct an IR mechanism for data sharing that is strictly Pareto-improving relative to no sharing
and that maximizes joint firm profits subject to not harming any consumer. Following the proof of the result, we provide
two simple settings in which the conditions are satisfied. 


\Rnote{A weaker statement that we can make (also above) is that it maximizes joint firm profits relative to all Pareto-improving
mechanisms.}

\begin{proposition}\label{prop:general-simple-case}
Suppose that with no sharing the equilibrium prices are $p_A$ and $p_B$. Furthermore, suppose that
$$\int_{\alpha_1}^{\frac{1+2\alpha_1}{4}} t(1-2\theta) f_1(\theta)d\theta \geq \int_{\frac{1+2\alpha_2}{4}}^{\alpha_2} \left[p_B+t(1-2\theta)\right] f_2(\theta)d\theta$$
and
$$\int_{\frac{1+2\alpha_2}{4}}^{\alpha_2} t(2\theta-1) f_2(\theta)d\theta \geq \int_{\alpha_1}^{\frac{1+2\alpha_1}{4}}\left[p_A+t(2\theta-1)\right]f_1(\theta)d\theta,$$
where 
$$\alpha_1=\frac{1}{2}-\frac{p_A}{2t}~~~~~\mbox{and}~~~~\alpha_2=\frac{1}{2}+\frac{p_B}{2t}.$$
Then the mechanism in which firm $B$ shares with firm $A$ the locations of $S_B$ consumers in $[\alpha_1,1/4+\alpha_1/2)$,
and  firm $A$ shares
with firm $B$ the locations of $S_A$ consumers in $(1/4+\alpha_2/2, \alpha_2]$ is IR, Pareto-improving, and jointly firm-optimal 
relative to all to other mechanisms that are weakly beneficial to every consumer.
\end{proposition}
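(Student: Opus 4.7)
The plan is to adapt the arguments from Propositions~\ref{prop:four-firm-consumer-opt} and~\ref{prop:four-firm-opt} to this more general setting. The upper endpoint $(1+2\alpha_1)/4 = 1/2 - p_A/(4t)$ is chosen so that the new $A$-personalized price $t(1-2\theta)$ (from the Bertrand subgame of Proposition~\ref{prop:one-full}) exactly equals the baseline $B$-personalized price $p_A + t(2\theta-1)$. The shared $S_B$ consumers are therefore precisely those whose transfer from $B$ to $A$ strictly increases joint firm profit.

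The first step is to characterize the subgame outcome at uniform prices $p_A, p_B$ under the mechanism. On $S_B$: consumers in $[0, \alpha_1)$ buy from $A$ at the uniform price $p_A$ (since $B$'s best personalized undercut $p_A+t(2\theta-1)$ is negative for $\theta<\alpha_1$); those in the shared range $[\alpha_1, (1+2\alpha_1)/4)$ are known to both firms, so by Proposition~\ref{prop:one-full} the closer firm $A$ wins at the Bertrand price $t(1-2\theta)$; and those in $[(1+2\alpha_1)/4, 1]$ are known only to $B$, who wins them at the personalized price $p_A + t(2\theta-1)$. A symmetric picture holds on $S_A$, while $S_\emptyset$ and $S_{AB}$ outcomes are unaffected by the sharing.

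The main technical step is to show that $(p_A, p_B)$ remains an equilibrium pair of uniform prices under the mechanism. I would prove that for any candidate deviation $p_A'$ by firm $A$,
\[
\Pi_A^{\mathrm{sh}}(p_A') \leq \Pi_A^{\mathrm{ns}}(p_A') + C_A,
\]
where $C_A = \int_{\alpha_1}^{(1+2\alpha_1)/4} t(1-2\theta) f_1(\theta)\,d\theta$ is the fixed personalized revenue $A$ earns on the shared interval, with equality at $p_A' = p_A$. The proof is a case analysis on where $\alpha_1(p_A') = 1/2 - p_A'/(2t)$ sits relative to $\alpha_1$ and $(1+2\alpha_1)/4$: in each case, the set of $S_B$ consumers captured by $A$'s uniform under sharing is weakly smaller than the corresponding no-sharing set, because consumers in the shared interval now pay personalized rather than uniform; the gained personalized revenue on that interval is exactly $C_A$, and any lost uniform revenue is bounded via monotonicity of $F_1$. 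Since $p_A$ maximizes $\Pi_A^{\mathrm{ns}}(\cdot)$ by hypothesis, the bound immediately gives $\Pi_A^{\mathrm{sh}}(p_A') \leq \Pi_A^{\mathrm{sh}}(p_A)$, with the symmetric argument for $B$. I anticipate the hardest subcase to be $p_A' < p_A/2$, where $A$'s uniform captures consumers on both sides of the shared interval and the lost-revenue bookkeeping is most delicate.

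With equilibrium established, the remaining claims follow cleanly. Profit changes relative to no sharing are concentrated on $S_B$ and $S_A$, and the two hypothesized integral inequalities are precisely the IR constraints for $A$ and $B$. For consumer welfare, a direct computation shows that each consumer $\theta$ in the shared $S_B$ range has utility difference (sharing minus no-sharing) equal to $2t(\theta - \alpha_1) \geq 0$, while consumers outside the shared ranges face identical prices as before. Finally, for joint firm-optimality subject to consumer non-harm, I would argue as in Proposition~\ref{prop:four-firm-opt}: the shared set $[\alpha_1, (1+2\alpha_1)/4)$ on $S_B$ is exactly the subset of consumers currently buying from $B$ at $p_A + t(2\theta-1)$ for whom the switch to $A$ at $t(1-2\theta)$ strictly raises joint firm profit, while consumers outside this set either already buy from the closer firm at the maximum sustainable price or could only yield more profit by being charged strictly more, which would violate non-harm. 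The symmetric argument for $S_A$ completes the proof.
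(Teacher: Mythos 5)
Your proposal is correct, and on the optimality, IR, and welfare claims it matches the paper's argument: the shared interval $[\alpha_1, 1/4+\alpha_1/2)$ is identified as exactly the set of $S_B$ consumers for whom the switch from $B$'s personalized price $p_A+t(2\theta-1)$ to $A$'s Bertrand price $t(1-2\theta)$ raises joint profit; the two integral hypotheses are read off as the IR constraints; and firm-optimality follows because every other consumer already buys from the closer firm. Your per-consumer welfare computation $2t(\theta-\alpha_1)\geq 0$ checks out. Where you genuinely diverge is the equilibrium-verification step. The paper's proof of this general proposition simply does not verify that $(p_A,p_B)$ remain equilibrium uniform prices after sharing --- it implicitly carries that over from the explicit piecewise maximizations done in the uniform four-segment case (Propositions~\ref{prop:four-firm-consumer-opt} and~\ref{prop:four-firm-opt}), which do not literally apply to arbitrary $f_i$. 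Your bound $\Pi_A^{\mathrm{sh}}(p_A')\leq \Pi_A^{\mathrm{ns}}(p_A')+C_A$ with equality at $p_A'=p_A$ is a cleaner and distribution-free way to close that gap: in every case of the position of $\alpha_1(p_A')$ relative to the shared interval, the uniform-price customer set under sharing is contained in the no-sharing one while the personalized revenue on the shared interval is the fixed constant $C_A$, so optimality of $p_A$ for $\Pi_A^{\mathrm{ns}}$ transfers directly. (Strictly, the fixed loss on $S_A$ from the symmetric sharing should appear as another additive constant in your identity, but since it is independent of $p_A'$ it does not affect the argmax.) In short, your route buys a complete proof of a step the paper leaves tacit, at the cost of the case analysis you flag; the paper's route is shorter because it leans on the earlier special-case computations.
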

\begin{proof}
Consider some consumer $\theta\in [\alpha_1, 1/4+\alpha_1/2)$ on $S_B$. If such a consumer were to face personalized pricing by both firms $A$ and $B$,
then he would choose $A$ and pay personalized price 
$p_A(\theta)=t(1-2\theta)$. On the other hand, if $A$ does not know the consumer's location, then that consumer faces $A$'s uniform price $p_A$ and $B$'s personalized price $p_B(\theta)=p_A+t(2\theta-1)$, and chooses the latter. Now, observe that the
set $[\alpha_1, 1/4+\alpha_1/2)$ is precisely the subset of consumers in $[\alpha_1, 1/2)$ for whom total firm profits $p_A(\theta)=t(1-2\theta)$ are higher
than $p_B(\theta)=p_A+t(2\theta-1)$. Thus, the mechanism under consideration maximizes total firm profits subject to
only affecting $S_B$ consumers in $[\alpha_1, 1/2)$ and, symmetrically, $S_A$ consumers in $(1/2,\alpha_2]$.

Next, as in Propositions~\ref{prop:one-firm-opt} and~\ref{prop:four-firm-opt}, the only other way for firms to increase profits, while not harming any consumer, 
is to cause other consumers to switch
to the closer firm and charge a higher price. However, all other consumers are already purchasing from the closest firm. Thus,
this mechanism maximizes total firm profits subject to not harming any consumer.

Finally, under these mechanisms, the first inequality in the statement of the proposition states that
firm $A$'s profit from $S_B$ consumers relative to no sharing is higher than the firm's loss on $S_A$ consumers.
The second inequality states the same for firm $B$. Thus, if the inequalities are satisfied, then the mechanism
is IR.
\end{proof}

A simple case in which the inequalities in Proposition~\ref{prop:general-simple-case} hold is when the distributions and equilibrium are symmetric:
namely, when  $f_A(\theta)=f_B(1-\theta)$ and $f_{AB}(\theta)=f_{AB}(1-\theta)$ for all $\theta$, and the equilibrium
uniform prices satisfy $p_A=p_B$. 
\Rnote{When the distributions are symmetric then there always exists a symmetric equilibrium,
but there may also exist asymmetric equilibria. Need reference.}

Another case is when consumers are uniformly distributed within each segment, but where consumers may not be evenly
divided between the segments, as long as $q_A$ and $q_B$ are within a factor of 3 of one another. This is formalized in the
following proposition.
\begin{proposition}\label{cor:uniform-non-equal}
Suppose consumers are uniformly distributed within each segment, and that $q_B>0$ and $q_A>0$.
Then the conditions in Proposition~\ref{prop:general-simple-case} are satisfied for every $q_{\emptyset}$ and $q_{AB}$ if and only if $3q_B\geq q_A\geq q_B/3$.
\end{proposition}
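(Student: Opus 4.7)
The plan is to reduce the integral conditions of Proposition~\ref{prop:general-simple-case} to simple algebraic conditions on the baseline equilibrium uniform prices, and then to analyze how the ratio $p_A/p_B$ depends on $q_{\emptyset}$. Note that $q_{AB}$ plays no role here, since $S_{AB}$ consumers face only personalized pricing, which is independent of the uniform prices.

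First I would derive the baseline equilibrium uniform prices. With uniform densities $f_i \equiv q_i$ on each segment, each firm's profit is a concave quadratic in its own uniform price, so the first-order conditions give the linear system
\begin{align*}
p_A = \frac{t}{2} + \frac{q_{\emptyset}}{2(q_B + q_{\emptyset})}\, p_B, \qquad
p_B = \frac{t}{2} + \frac{q_{\emptyset}}{2(q_A + q_{\emptyset})}\, p_A,
\end{align*}
whose unique solution yields
\[
\frac{p_A}{p_B} \;=\; \frac{(2q_B + 3q_{\emptyset})(q_A + q_{\emptyset})}{(2q_A + 3q_{\emptyset})(q_B + q_{\emptyset})}.
\]
A quick check confirms $p_A, p_B \in (0,t)$ and $|p_A - p_B| < t$, so this is a valid pure-strategy equilibrium. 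Substituting $f_1 \equiv q_B$ and $f_2 \equiv q_A$ into the two integrals of Proposition~\ref{prop:general-simple-case}, together with $\alpha_1 = 1/2 - p_A/(2t)$ and $\alpha_2 = 1/2 + p_B/(2t)$, a routine antiderivative calculation collapses both integrands to monomials in $(1-2\theta)$, and the pair of inequalities reduces to
\[
3 q_B\, p_A^2 \geq q_A\, p_B^2 \qquad\text{and}\qquad 3 q_A\, p_B^2 \geq q_B\, p_A^2.
\]

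The analytical core is a two-sided bound on $f := p_A/p_B$. Writing $r = q_A/q_B$ and $x = q_{\emptyset}/q_B$, direct polynomial manipulation of the closed form above is expected to yield
\begin{align*}
f(x) - 1 \;=\; \frac{(r-1)\, x}{(1+x)(2r+3x)}, \qquad
r - f(x) \;=\; \frac{(r-1)\bigl[2r + 2(r+1)x + 3x^2\bigr]}{(1+x)(2r+3x)}.
\end{align*}
Both numerators share the factor $(r-1)$, so for $r \geq 1$ one has $1 \leq f(x) \leq r$ for every $x \geq 0$, and the reversed chain $r \leq f(x) \leq 1$ holds for $r \leq 1$ by symmetry.

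With this bound the claim drops out. Assume WLOG $r \geq 1$. The first inequality rewrites as $3 f(x)^2 \geq r$; using $f(x) \geq 1$, it holds for all $x \geq 0$ exactly when $r \leq 3$. The second rewrites as $f(x)^2 \leq 3r$; using $f(x) \leq r$, it holds whenever $r^2 \leq 3r$, i.e., again when $r \leq 3$. Conversely, at $q_{\emptyset} = 0$ one has $f = 1$, so $r > 3$ (resp.\ $r < 1/3$, by symmetry) makes the first (resp.\ second) inequality fail there. This yields the advertised equivalence $q_B/3 \leq q_A \leq 3 q_B$. The main obstacle is spotting and verifying the clean factorizations of $f(x) - 1$ and $r - f(x)$ that deliver the two-sided bound; once those are secured, every other step is a direct computation.
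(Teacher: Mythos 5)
Your proof is correct and reaches the stated equivalence, but it diverges from the paper's argument at the decisive step, and in a way that actually improves on it. Both proofs begin the same way: solve the two first-order conditions for the no-sharing uniform prices and evaluate the two integrals of Proposition~\ref{prop:general-simple-case} with $f_1\equiv q_B$, $f_2\equiv q_A$, arriving at the pair of conditions $3q_Bp_A^2\geq q_Ap_B^2$ and $3q_Ap_B^2\geq q_Bp_A^2$ (your closed form for $p_A/p_B$ and both factorizations of $f(x)-1$ and $r-f(x)$ check out). At this point the paper asserts that the first condition is equivalent to the \emph{linear} inequality $3q_Bp_A\geq q_Ap_B$ --- the squares on the prices silently disappear --- and then verifies that linear inequality by expanding the explicit polynomial expressions for $p_A$ and $p_B$ over their common denominator. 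Since $3q_Bp_A^2\geq q_Ap_B^2$ and $3q_Bp_A\geq q_Ap_B$ are not equivalent in general, the paper's reduction is a genuine slip that its subsequent algebra does not repair. You instead keep the quadratic conditions and dispose of them with the two-sided bound $1\leq p_A/p_B\leq q_A/q_B$ (for $q_A\geq q_B$), which makes sufficiency of $q_A\leq 3q_B$ immediate for both conditions, while necessity follows, exactly as in the paper, from evaluating at $q_\emptyset=0$ where $p_A=p_B$. The only minor points: ``for every $q_\emptyset$'' should be read as every $q_\emptyset\in[0,1-q_A-q_B]$, which is harmless since your sufficiency argument covers all $x\geq 0$ and necessity uses only the always-feasible point $q_\emptyset=0$; and, like the paper, you take on faith that the first-order conditions characterize the equilibrium, which your concavity remark adequately covers.
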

\begin{proof}
We begin with uniform prices under no sharing. 
Firm $A$'s uniform price will apply to
$S_B$ consumers in $[0,\mu(p_A))$ and $S_{\emptyset}$ consumers in $[0,\mu(p_A,p_B))$.
Thus, firm $A$ maximizes its profit by solving
$$\max_{p_A}p_A\left[q_B\left(\frac{1}{2}-\frac{p_A}{2t}\right)+q_{\emptyset}\left(\frac{1}{2}-\frac{p_A-p_B}{2t}\right)\right],$$
which has interior solution
$$p_A=\frac{t}{2}+\frac{q_{\emptyset} p_B}{2(q_B+q_{\emptyset})}.$$
The same solution applies symmetrically to firm $B$, but replacing $q_B$ with $q_A$.
Then, plugging the solution to $B$'s  first-order condition
into $A$'s yields the unique solution
$$p_A=\frac{t(2q_B+3q_{\emptyset})}{4(q_B+q_{\emptyset})-\frac{q_{\emptyset}^2}{q_A+q_{\emptyset}}}=\frac{t\left(2q_Bq_A+2q_Bq_{\emptyset}+3q_Aq_{\emptyset}+3q_{\emptyset}^2\right)}{4q_Bq_A+4q_Bq_{\emptyset}+4q_Aq_{\emptyset} + 3q_{\emptyset}^2}$$
and
$$p_B=\frac{t(2q_A+3q_{\emptyset})}{4(q_A+q_{\emptyset})-\frac{q_{\emptyset}^2}{q_B+q_{\emptyset}}}=\frac{t\left(2q_Bq_A+2q_Aq_{\emptyset}+3q_Bq_{\emptyset}+3q_{\emptyset}^2\right)}{4q_Bq_A+4q_Bq_{\emptyset}+4q_Aq_{\emptyset} + 3q_{\emptyset}^2}.$$

These prices yield the values $\alpha_1=\mu(p_A) = 1/2-p_A/(2t)$ and $\alpha_2=\mu(p_B) = p_B/(2t)-1/2$. Consider the first
inequality in Proposition~\ref{prop:general-simple-case}, but with $f_1(\theta)=q_B$ and $f_2(\theta)=q_A$:
$$q_B\int_{\alpha_1}^{\frac{1+2\alpha_1}{4}} t(1-2\theta)d\theta \geq q_A\int_{\frac{1+2\alpha_2}{4}}^{\alpha_2} \left[p_B+t(1-2\theta)\right] d\theta.$$
Simplifying this yields
left-hand-side value
$$q_B\left(\frac{\alpha_1+1/2}{2}-\alpha_1\right)\frac{3p_A}{4} = \frac{3q_B p_A^2}{16t}$$
and right-hand-side value $\frac{q_A p_B^2}{16t}$.
Thus, the first inequality in Proposition~\ref{prop:general-simple-case} holds if and only if $3q_B p_A \geq q_A p_B$.
Since $p_A$ and $p_B$ have the same denominator, the inequality holds if and only if
$$3q_B\left(2q_Bq_A+2q_Bq_{\emptyset}+3q_Aq_{\emptyset}+3q_{\emptyset}^2\right) \geq q_A\left(2q_Bq_A+2q_Aq_{\emptyset}+3q_Bq_{\emptyset}+3q_{\emptyset}^2\right),$$
which holds if and only if
$$3q_Bq_Aq_{\emptyset} + 3q_B\left(2q_Bq_A+2q_Bq_{\emptyset}+2q_Aq_{\emptyset}+3q_{\emptyset}^2\right) \geq q_Bq_Aq_{\emptyset} + q_A\left(2q_Bq_A+2q_Aq_{\emptyset}+2q_Bq_{\emptyset}+3q_{\emptyset}^2\right).$$
A sufficient condition for this last inequality to hold, for any value of $q_{\emptyset}$, is that $3q_B \geq q_A$. Furthermore, 
if $q_{\emptyset}=0$ then this condition is also necessary.

Finally, a symmetric analysis for the second inequality in Proposition~\ref{prop:general-simple-case} yields the necessary and
sufficient condition $3q_A\geq q_B$.
\end{proof}

\end{document}